\documentclass[11pt]{article}
\usepackage{amsmath, amssymb, amsthm, xparse, xcolor, algorithm, dsfont, algpseudocode, bbm, marginnote}
\usepackage{thm-restate}
\usepackage[parfill]{parskip}
\usepackage[most]{tcolorbox}
\usepackage[margin=1in]{geometry}
\usepackage{dsfont}
\usepackage[hidelinks]{hyperref}
\usepackage[normalem]{ulem}
\usepackage{cleveref}
\usepackage{mathtools}
\usepackage{tikz}
\usetikzlibrary{decorations.pathreplacing, calc}

\newtheorem{theorem}{Theorem}
\newtheorem{lemma}[theorem]{Lemma}

\theoremstyle{definition}
\newtheorem{definition}[theorem]{Definition}

\newtheorem{question}[theorem]{Question}

\NewDocumentCommand{\ot}{g}{
    \IfNoValueTF{#1}{\tilde{O}}{\tilde{O}(#1)}
}

\DeclareMathOperator{\poly}{poly}
\DeclareMathOperator{\polylog}{polylog}

\DeclareMathOperator{\supp}{Supp}

\newcommand{\calS}{\mathcal{S}}

\newcommand{\calA}{\mathcal{A}}

\newcommand{\calC}{\mathcal{C}}

\newcommand{\calI}{\mathcal{I}}

\newcommand{\calU}{\mathcal{U}}

\newcommand{\se}{\textsf{se-consrt}}

\newcommand{\ddec}{d_{\textsf{decide}}}

\definecolor{amber}{rgb}{1.0, 0.49, 0.0}

\newcommand{\br}{\boldsymbol{r}}

\newcommand{\bq}{\boldsymbol{q}}

\newcommand{\ind}[1]{\mathbbm{1}\{#1\}}

\title{Approximate Nearest Neighbor in Ultra-High Dimensional $\ell_\infty$}
\author{Nathan White \medskip \\ University of Pennsylvania \\ \href{mailto:nathanlw@cis.upenn.edu}{\texttt{nathanlw@cis.upenn.edu}} \and Tian Zhang \medskip \\ University of Pennsylvania \\ \href{mailto:tianzh@cis.upenn.edu}{\texttt{tianzh@cis.upenn.edu}}}
\date{\today}

\begin{document}
\maketitle

\begin{abstract}
    We study the approximate nearest neighbor problem under $\ell_\infty$ in the ultra-high dimensional setting where the dimension $d$ is significantly larger than the number of points $n$. Thus, we desire data structures with no dependence on $d$ in the query time. [Herold-Nanongkai-Spoerhase-Varma-Wu, SoCG 2025] introduce this problem and give data structures in $\ell_p$: for $p=1,2$, they give $(1+\varepsilon)$-approximation data structures with space $\tilde{O}(n\log d/\text{poly}(\varepsilon))$ and query time $\tilde{O}(n/\text{poly}(\varepsilon))$. Since any data structure must have query time $\Omega(\min \{n,d\})$, this query time is nearly tight. However, their results are inefficient for $\ell_\infty$, with query time $\Omega(nd)$.

In order to handle the challenges of $\ell_\infty$, we introduce a notion of \textit{subset embeddings}, which embed points by simply selecting a subset of dimensions. In particular, we show one may preserve all pairwise distances of an $n$ point dataset up to a factor of $O(c)$ by computing distances on only $n^{1+1/c}$ coordinates. We also show a matching lower bound: for any $c > 1$, there exists a set of $n$ points in $\mathbb{R}^{d}$ such that any subset embedding for the set with approximation $c$ must have at least $n^{1+\Omega(1/c)}$ coordinates.

Using our subset embeddings, we give data structures for approximate nearest neighbor in $\ell_\infty$ which have query time independent of $d$ and space that depends only logarithmically on $d$. Specifically, for any approximation factor $c \geq 1$, we give data structures with space $O(n^2\log d)$, query time $\tilde{O}(n^{1+1/c})$, and approximation $O(c\log\log n)$. This result implies a $O(\log n)$-approximation with query time $\tilde{O}(n)$, which matches the query time lower bound up to $\text{polylog} n$ factors. 

Finally, we give another data structure for the approximate nearest neighbor under $\ell_\infty$ with the same space and query time as our subset embedding approach, but with approximation $O(c^{\log_2 3}) \approx O(c^{1.58})$. This allows us to achieve $O(1)$-approximation with query time e.g.~$n^{1.01}$
\end{abstract}

\newpage
\section{Introduction}
The Approximate Nearest Neighbor (ANN) problem is defined as follows: the input is a dataset $X$ of $n$ points in a metric space $(\calU, d_{\calU})$, and a desired approximation $c$.
Our goal is to preprocess the dataset $X \subset \calU$ into a data structure.
Upon the arrival of a previously unseen query point $q \in \calU$, we must output a point $\hat{p}$ in $X$ whose distance to $q$ is within a $c$ factor of the distance between $q$ and its nearest neighbor $p^*$ in $X$, i.e.~$d_{\calU}(q,\hat{p}) \leq c \cdot d_{\calU}(q, p^*)$.
Besides the approximation factor $c$, the performance of ANN data structures is also measured by storage space and query time.

In this work, the universe $\calU$ is always $\mathbb{R}^d$. The dimension $d$ plays an important role in how one can approach the ANN problem.
In low dimensions where $d = o(\log n)$, space and query time that are exponential in $d$ are affordable.
Efficient data structures, with query time that depends logarithmically on $n$, are known for all metric spaces in this regime \cite{clarkson_kenneth_nns, krauthgamer_navigating_nets, karger_nn_growth_restricted, beygelzimer_cover_tree}.
In the high-dimensional regime where $\omega(\log n) \leq d \leq o(n)$, algorithms for the low-dimensional setting suffer from the ``curse of dimensionality", where $2^d$ is prohibitively large.
For the high-dimensional regime, researchers have designed efficient data structures for various metric spaces \cite{IM98, indyk2001approximate, andoni2008earth, andoni2017approximate}.
These data structures typically have space or query time that is polynomial in the dimension $d$, and their query time is also sublinear in $n$.

\paragraph{Ultra-High Dimensional ANN} We consider the ultra-high dimensional regime, first studied in \cite{mpii-ultra}, where the dimension $d$ far exceeds the number of points, i.e.~$d \gg n$. 
This setting is motivated by the recent explosion in the complexity of data, for example genetic data that lie in spaces with tens of millions of dimensions \cite{stanaway2019emerge}.
The goal of ultra-high dimensional ANN is to design data structures with very little dependence on $d$ in the storage space and query time, given that a cost polynomial or even linear in $d$ is prohibitively large.
In particular, we desire data structures with \textit{no} dependence on $d$ in the query time and only logarithmic dependence on $d$ in space.

These restrictions pose some unique challenges: we can only read a small fraction of the query point and cannot even store any full point in our data structure!\footnote{We still allow preprocessing time to be linear in $d$, as this is necessary to be able to read the whole dataset.}
For example, in ultra-high dimensional Euclidean space, applying the well-known Johnson-Lindenstrauss (JL) Lemma to reduce dimensions is infeasible: even if JL were able to reduce the dimension in preprocessing, it would need to read the entire query vector, resulting in query time linear in $d$ \cite{johnson1984extensions}.
Therefore, as with going from the low-dimensional regime to the high-dimensional regime, new techniques are required for ANN data structures in ultra-high dimensions.

For $\ell_1$ and $\ell_2$, \cite{mpii-ultra} gives $(1+\varepsilon)$-approximate data structures with space $\ot{n \log d / \poly(\varepsilon)}$ and query time $\ot{n / \poly(\varepsilon)}$, which is nearly optimal since any data structure must have query time $\Omega(\min\{n, d\})$\footnote{This lower bound is shown in \cite{mpii-ultra}; we also include a proof in Lemma \ref{lem:qt-lower-bound} for completeness.}.
For general $\ell_p$, they achieve a $(1+\varepsilon)$-approximation with space $\ot{n^2 \log d \cdot (\log \log n / \varepsilon)^p}$ and query time $\ot{n \cdot (\log \log n / \varepsilon)^p}$, for $\varepsilon=O(1/p)$.
Their techniques rely on importance sampling over the coordinates based on pairwise distances, which works well when $p$ is small.
However, for large $p$, the space and query time grow exponentially in $p$, making these data structures inefficient for large $p$.
In particular, their query time for $\ell_\infty$ is $\Omega(nd)$.

\paragraph{$\ell_\infty$ Metrics}
$\ell_\infty$ is an important and well-studied metric, in part because every metric space of $n$ points can be isometrically embedded into $\ell_\infty$ of dimension $n$.
Perhaps unsurprisingly, then, compressing vectors in $\ell_\infty$ is notoriously hard to achieve while maintaining distances (i.e.~via sketching): to achieve approximation $k$, any $\ell_\infty$ sketch must have size $\Omega(n/k^2)$.
This, at first glance, presents a challenge for ultra-high dimensional ANN in $\ell_\infty$: since no dataset point can be stored in its entirety, the data structure must store a ``compressed'' representation of the points.
Fortunately, our notion of subset embeddings (see Section \ref{sec:tech-overview} and Section \ref{sec:subset-embed}) will help us overcome this difficulty.

\subsection{Main Results}
One might desire a $(1+\varepsilon)$-approximation data structure, for any $\varepsilon > 0$.
Unfortunately, we show this is impossible with $o(d)$ query time: any data structure with approximation better than 3 requires $\Omega(d)$ query time.
This lower bound is tight; we note in Observation \ref{obs:3-approx} a 3-approximation with query time $O(n^2)$ (and no dependence on $d$).

\begin{theorem}
\label{thm:lb-randomized-alg}
    For any $c < 3$, any randomized data structure for $c$-approximate nearest neighbor over $\ell_{\infty}^d$ with success probability at least $0.5001$ for any dataset and query must have query time $\Omega(d)$.
\end{theorem}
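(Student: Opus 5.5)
Use Yao's minimax principle. It suffices to build a distribution over (dataset, query) pairs on which every deterministic data structure with $o(d)$ query probes fails with probability above $0.4999$. The instance is a hidden-coordinate construction with only two data points, chosen so that the nearest-neighbor distances differ by a factor of $3$ and the only evidence sits in one coordinate.

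\textbf{Construction.} Take data points $p_1 = \mathbf{0}$ and $p_2 = 2\cdot\mathbf{1}$, so that $\|p_1 - p_2\|_\infty = 2$. Pick a uniformly random coordinate $i^* \in [d]$ and a uniformly random bit $b \in \{1,2\}$. The query is $q = \mathbf{1}$ on every coordinate except $i^*$. On coordinate $i^*$:
\begin{itemize}
\item if $b = 1$, set $q_{i^*} = -1$, so that $\|q - p_1\|_\infty = 1$ and $\|q - p_2\|_\infty = 3$;
\item if $b = 2$, set $q_{i^*} = 3$, so that $\|q - p_2\|_\infty = 1$ and $\|q - p_1\|_\infty = 3$.
\end{itemize}
In either case the nearest neighbor is at distance $1$ and the other point is at distance $3$. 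Any $c$-approximation with $c < 3$ must therefore output $p_b$ exactly.

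\textbf{Indistinguishability.} The dataset is fixed, so the data structure carries no information about $i^*$ or $b$. A deterministic query algorithm probes coordinates of $q$ adaptively. As long as it has not probed $i^*$, every answer it sees is $1$. Consequently its probe sequence and its final output are fixed in advance, independent of $(i^*, b)$. Call this sequence $S$ and this output $o$. If the algorithm makes at most $t$ probes, then $\Pr[i^* \in S] \le t/d$. On the event $i^* \notin S$, the output $o$ is independent of $b$ and is correct with probability exactly $1/2$. The overall success probability is therefore at most
\[
\tfrac12 + \tfrac{t}{2d}.
\]
Requiring this to be at least $0.5001$ forces $t \ge 0.0002\, d = \Omega(d)$. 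Yao's principle then transfers this bound to randomized data structures.

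\textbf{Main obstacle.} The delicate step is the model: the data structure may depend arbitrarily on the dataset, and query time counts only the probes into $q$. Both are handled because the dataset is fixed and the algorithm's behavior off $i^*$ is deterministic. One further detail is that a worst-case query time $t$ against a fixed input distribution is what Yao's principle requires; an expected-time bound would need Markov's inequality with a small constant loss.
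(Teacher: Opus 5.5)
Your proof is correct and matches the paper's argument. It uses the same two-point dataset (the paper's $x\pm\vec{1}$ with $x=\mathbf{1}$), the same uniformly random hidden coordinate shifted by $\pm 2$, and the same Yao's-principle averaging, which bounds the success probability by $\tfrac12+\tfrac{t}{2d}$. Two small points in your favor: you handle adaptive probing explicitly (the probe sequence is fixed until $i^*$ is probed), where the paper just speaks of ``the set of coordinates that $R_{r^*}$ checks,'' and your constant $0.0002$ is the tight version of the paper's $0.0001$.
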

We prove Theorem \ref{thm:lb-randomized-alg} in Section \ref{sec:lower-bound-proofs}.

Our main results are two data structures for ultra-high dimensional ANN in $\ell_\infty$, with different trade-offs between query time and approximation.
\begin{theorem}
    \label{thm:small-query-time}
    For any $c\geq 1$, there exists a data structure for approximate nearest neighbor over $\ell_\infty^d$ with
    \begin{itemize}
        \item Approximation $O(c \cdot \log\log n)$;
        \item Query time $\ot{n^{1+1/c}}$; and
        \item Storage space $O(n^{2}\log d)$
    \end{itemize}
\end{theorem}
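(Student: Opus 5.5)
The data structure is built from a subset embedding of the kind described in the abstract: a set $S\subseteq[d]$ of roughly $n^{1+1/c}$ coordinates on which all pairwise $\ell_\infty$ distances among the data points are preserved up to $O(c)$. The obstacle is that the query point $q$ is not known at preprocessing time, so distances from $q$ cannot be preserved directly. I will therefore use the preserved pairwise structure of $X$ to certify the answer through the triangle inequality, which is where the additive-factor-3 behaviour of Observation \ref{obs:3-approx} and the $\log\log n$ loss come in.

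\textbf{Step 1 (subset embedding).} For every pair $p,p'\in X$ let $j(p,p')$ be a coordinate attaining $\|p-p'\|_\infty$. I store all these witness coordinates together with their values. This costs $O(n^2)$ words of $O(\log d)$ bits, matching the $O(n^2\log d)$ space bound, and it is exactly what Observation \ref{obs:3-approx} uses.

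The subset embedding then selects a small set of witnesses. Greedily, in the spirit of spanner constructions, I pick a set $S$ of $n^{1+1/c}$ coordinates such that every pair $(p,p')$ has, within $O(c)$ hops in a sparse "spanner" graph $H$ on $X$, a path whose edges all have witness coordinates in $S$. Since $H$ has girth greater than $2c$ and only $n^{1+1/c}$ edges, the triangle inequality gives $\|p-p'\|_\infty\le O(c)\max_{j\in S}|p_j-p'_j|$.

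\textbf{Step 2 (query).} At query time I read $q$ only on $S$, which costs $\tilde O(n^{1+1/c})$. For each $p$ I compute $D_S(q,p)=\max_{j\in S}|q_j-p_j|$; this is a lower bound on $\|q-p\|_\infty$.

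The goal is to relate $\|q-\hat p\|_\infty$ to $D_S$. The estimate I want is
\[
\|q-\hat p\|_\infty\le\|q-p^*\|_\infty+\|p^*-\hat p\|_\infty ,
\qquad
\|p^*-\hat p\|_\infty\le O(c)\,D_S(p^*,\hat p)\le O(c)\bigl(D_S(q,p^*)+D_S(q,\hat p)\bigr).
\]
If $\hat p$ is chosen to minimize $D_S(q,\cdot)$, this yields $\|q-\hat p\|_\infty\le O(c)\,\|q-p^*\|_\infty$. The catch is that $D_S(p^*,\hat p)$ is not exactly what the embedding preserves. The embedding preserves $\|p-p'\|_\infty$ only via hops along $H$, so $\hat p$ and $p^*$ must be connected through intermediate points, and each intermediate point is compared to $q$ at a different scale.

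\textbf{Step 3 (the main obstacle: handling multiple scales).} To control the hops, I would bucket pairwise distances into geometric scales and build a separate spanner per scale. A naive treatment would cost $\log n$ in the approximation. To reduce this to $\log\log n$, I would binary search over the $O(\log n)$ relevant scales, run a tournament-style elimination within each tested scale, and at each step use the witness coordinates to discard points. Binary searching over $\log n$ scales takes $\log\log n$ rounds. Each round loses an additive $O(c)\cdot r^*$, where $r^*=\|q-p^*\|_\infty$, so the errors sum to $O(c\log\log n)$.

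I expect the hardest part to be proving that each round's elimination never discards $p^*$ unless a surviving point is within $O(c)\,r^*$ of it, while staying within the $\tilde O(n^{1+1/c})$ coordinate reads. If the binary search argument fails, I would first fall back to an $O(c\log n)$ bound by summing over all scales, and then try to compress that sum.
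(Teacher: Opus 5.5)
There is a genuine gap, and it is not where you locate it. Your Step 2 chain already closes once $S$ is a true subset embedding with distortion $\alpha=O(c)$, i.e.\ $\|p-p'\|_\infty\le\alpha\|p_S-p'_S\|_\infty$ for all $p,p'\in X$. In that case $\|p^*-\hat p\|_\infty\le\alpha\bigl(D_S(q,p^*)+D_S(q,\hat p)\bigr)$. So if $D_S(q,\hat p)\le\beta D_S(q,p^*)$, you get $\|q-\hat p\|_\infty\le(\alpha(\beta+1)+1)\|q-p^*\|_\infty$; this is Lemma \ref{lem:subset-to-ann}. There is no multi-scale issue.

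The ``catch'' you found comes from your Step 1 justification, which is invalid. A short path in $H$ whose edges have witness coordinates in $S$ gives no lower bound on $\max_{j\in S}|p_j-p'_j|$, because the triangle inequality runs the wrong way. For instance, take $p=0$, $w=(\tfrac12,0,\tfrac12)$, $p'=(0,0,1)$, and choose coordinate $1$ as the witness of both $\{p,w\}$ and $\{w,p'\}$. The path $p,w,p'$ has length exactly $\|p-p'\|_\infty$, yet $|p_1-p'_1|=0$. The right construction puts the distortion guarantee into the greedy stopping rule: repeatedly take the farthest pair with $\|u_S-v_S\|_\infty<\|u-v\|_\infty/\alpha$ and add its witness coordinate. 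The girth of the resulting witness graph is used only to bound $|S|\le n^{1+1/c}$.

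The real obstacle is computational, and the proposal does not address it. Evaluating $D_S(q,p)$ for all $n$ points takes $n|S|=n^{2+1/c}$ time. It also requires storing $p_S$ for every $p$, which is $n^{2+1/c}$ words; your stored witness values cover at most $n-1$ coordinates of $S$ per point. So both the query-time and space bounds fail, and you are worse than the trivial $O(n^2)$ structure.

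The missing idea is to treat $X_S=\{p_S\}$ as an $\ell_\infty$ ANN instance in dimension $|S|=\mathrm{poly}(n)$ and plug in Indyk's data structure (Theorem \ref{thm:i01}). This gives query time $O(|S|\,\mathrm{polylog}\,n)$ and approximation $\beta=O(\log\log|S|)=O(\log\log n)$, which with Lemma \ref{lem:subset-to-ann} yields $O(c\log\log n)$. That is where the $\log\log n$ comes from, not from a binary search over scales.

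To bring the space down to $O(n^2\log d)$, the paper does three things. It partitions $X$ into $n^{1-\rho}$ groups of $n^{\rho}$ points, with $\rho=(1-1/c)/2$. It builds a per-group subset embedding of dimension $n^{(1+1/c)\rho}$ and a per-group Indyk structure, for a total of $O(n^{2-1/c+\rho/c})=O(n^2)$ words. Finally, it picks among the $n^{1-\rho}$ group winners with the $3$-approximate structure of Lemma \ref{lem:3-approx}, in $O(n^{2(1-\rho)})=O(n^{1+1/c})$ time.

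Your Step 3 tournament cannot stand in for this. It is unspecified, and its key no-discard claim is unproven by your own account. Nor is there any a priori $O(\log n)$ bound on the number of distance scales for arbitrary data.
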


\begin{theorem}
    \label{thm:c-258}
    For any $c \geq 1$, there exists a data structure for approximate nearest neighbor over $\ell_\infty^d$ with
    \begin{itemize}
        \item Approximation $3\cdot c^{\log 3} \approx O(c^{1.58})$;
        \item Query time $\ot{n^{1+1/c}}$; and
        \item Storage space $O(n^2\log d)$
    \end{itemize}
\end{theorem}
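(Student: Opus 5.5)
The plan is a recursion on the approximation level. Each level applies a $3$-approximate ``tournament'' to the winners of recursively solved subproblems, and the approximation factor multiplies by $3$ per level. Taking $k=\lceil\log_2 c\rceil$ levels gives approximation $3\cdot 3^{\log c}=3c^{\log 3}$, while the query exponent decreases from $2$ toward $1+1/c$.

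\textbf{Storage and the base tournament.} In preprocessing, for every pair $p,p'\in X$ we store one coordinate $i(p,p')\in[d]$ with $|p_{i}-p'_{i}|=\|p-p'\|_\infty$, together with the values $p_i$ and $p'_i$. This takes $O(n^2)$ words of $O(\log d)$ bits each, so space $O(n^2\log d)$. All levels of the recursion only ever look up these stored pairs, so no further space is needed.

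Given a candidate set $S\subseteq X$ of size $m$ and a query $q$, compute for every $p\in S$ the estimate
\[
r_p=\max_{p'\in S\setminus\{p\}} |q_{i(p,p')}-p_{i(p,p')}|,
\]
and output $\hat p=\arg\min_p r_p$. This costs $O(m^2)$ time and reads only $O(m^2)$ coordinates of $q$. It is the construction behind Observation~\ref{obs:3-approx}.

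To see the factor $3$, let $p^*$ be the nearest point of $S$ and $d^*=\|q-p^*\|_\infty$. Every $r_p$ is a lower bound on $\|q-p\|_\infty$, so $r_{\hat p}\le r_{p^*}\le d^*$. Put $i=i(\hat p,p^*)$. Then $|q_i-\hat p_i|\le d^*$ and $|q_i-p^*_i|\le d^*$, so $\|\hat p-p^*\|_\infty=|\hat p_i-p^*_i|\le 2d^*$, and therefore $\|q-\hat p\|_\infty\le 3d^*$.

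\textbf{Recursive step.} Let $\calA_j$ denote a procedure that, on any candidate set of size $m$, returns a $3^{j+1}$-approximate nearest neighbor of $q$ within that set in time $\ot{m^{1+1/2^j}}$. The base case $\calA_0$ is the tournament above.

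To build $\calA_j$ from $\calA_{j-1}$, fix a group size $s$ and partition the $m$ candidates arbitrarily into $m/s$ groups of size $s$. Run $\calA_{j-1}$ on each group to get a winner $w_G$, then run the tournament on the $m/s$ winners.

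For correctness, let $G^*$ be the group containing the true nearest neighbor. Then $\|q-w_{G^*}\|_\infty\le 3^{j}d^*$, and the final tournament returns a point within factor $3$ of the best winner, hence within $3^{j+1}d^*$.

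The running time satisfies
\[
T_j(m)\le \frac{m}{s}\,T_{j-1}(s)+O\!\left(\frac{m^2}{s^2}\right)=\ot{m\,s^{1/2^{j-1}}+\frac{m^2}{s^2}}.
\]
Choose $s$ to balance the two terms, i.e.\ $s^{2+1/2^{j-1}}=m$. The resulting exponent is
\[
1+\frac{1}{1+2^{j}},
\]
which is at most $1+1/2^{j}$, so the induction goes through. The first step already gives $m^{4/3}\le m^{3/2}$. Setting $j=\lceil\log_2 c\rceil$ with $m=n$ yields query time $\ot{n^{1+1/c}}$ and approximation at most $3\cdot 3^{\log_2 c}=3c^{\log 3}$, up to rounding $c$ to a power of two.

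\textbf{Main obstacle.} The delicate point is the base tournament. The naive rule of pairwise elimination (compare $p,p'$ on $i(p,p')$ and keep the closer one) loses a factor $3$ at every comparison along a chain of comparisons, which would ruin the bound. The fix is the min-of-max estimate $r_p$ above: it compares the returned point directly against $p^*$ on the single coordinate $i(\hat p,p^*)$ and needs only the triangle inequality once.

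The remaining details are routine: uneven group sizes can be handled by padding or ceilings, and the winners $w_G$ are query dependent, so the final tournament must be allowed to use any stored pair. This is fine because every pair of $X$ is stored.
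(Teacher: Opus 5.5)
Your proposal is correct and is the paper's construction (Figure~\ref{fig:c-258}) read top-down. Unrolled, it filters the candidates with the $3$-approximate min-of-max tournament of Lemma~\ref{lem:3-approx} on groups whose size exponents double from level to level, losing one factor of $3$ per level.

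The only slack is in your final accounting. Feeding back the weaker bound $1+1/2^{j}$ forces $\lceil\log_2 c\rceil+1$ levels, which overshoots $3c^{\log 3}$ by a factor of up to $3$ when $c$ is not a power of two (e.g.\ $27>3\cdot 3^{\log 3}\approx 17.1$ for $c=3$). Instead, propagate the sharper exponent you already computed: with $T_j(m)=\tilde{O}(m^{1+e_j})$ and $e_0=1$, you get $e_j=e_{j-1}/(2+e_{j-1})=1/(2^{j+1}-1)$. This shows $\lfloor\log_2 c\rfloor+1$ levels suffice for integer $c$, giving exactly the paper's bound $3^{\lfloor\log c\rfloor+1}\le 3c^{\log 3}$.
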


These results allow for a spectrum of performance: we can achieve an $O(1)$-approximation with close to linear query time $O(n^{1.01})$, or obtain a $O(\log n)$-approximation with query time $\ot{n}$.
Note that the query time of this latter data structure is tight, up to $\polylog n$ factors, for any data structure with finite approximation (see Lemma \ref{lem:qt-lower-bound}).

\subsection{Technical Overview}
\label{sec:tech-overview}
Our algorithms are based on a simple yet very useful observation: for every point $x\in X$, there is a set $S_x$ of only $n$ coordinates such that for all $y\in X$, there exists an $i\in S_x$ with $|x_i - y_i| = \|x - y\|_\infty$. Then, we define the \textit{deciding distance} between a point $q$ and a dataset point $x\in X$ as
\[
    \ddec(q,x) = \max_{i\in S_x} |q_i - x_i|.
\]
We can show that, for any $q\in \mathbb{R}^d$, the point $x\in X$ which minimizes $\ddec(q,x)$ is a 3-approximation to the true nearest neighbor (see the proof of Lemma \ref{lem:3-approx}).
This then allows us to prove the following result by storing $S_x$ for all $x \in X$ and then simply computing $\ddec(q,x)$ for each point $x$ at query time.

\begin{restatable}{observation}{threeapprox}
    \label{obs:3-approx}
    There exists a data structure for 3-approximate nearest neighbor over $\ell_\infty$ with
    \begin{itemize}
        \item Query time $O(n^2)$;
        \item Storage space $O(n^{2}\log d)$
    \end{itemize}
\end{restatable}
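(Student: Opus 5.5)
The plan follows the technical overview. The data structure stores, for each $x\in X$, a set $S_x$ of at most $n-1$ coordinates, together with the values $x_i$ for $i\in S_x$. At query time it computes $\ddec(q,x)$ for every $x$ and outputs the minimizer $\hat p$.

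\textbf{Construction and cost.} During preprocessing, for each pair $x\neq y$ we pick a coordinate $i_{x,y}\in\arg\max_i |x_i-y_i|$ and set $S_x=\{i_{x,y}: y\in X\setminus\{x\}\}$. Storing a coordinate index takes $O(\log d)$ bits. Storing the value $x_i$ takes $O(1)$ words, or $O(\log d)$ bits under the usual word model, so the total space is $O(n^2\log d)$. A query reads $q_i$ for every $i\in\bigcup_x S_x$, which is at most $n(n-1)$ coordinates, and does $O(1)$ work per coordinate. The query time is therefore $O(n^2)$, independent of $d$.

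\textbf{Approximation.} Let $p^*$ be the nearest neighbor, $r=\|q-p^*\|_\infty$, and let $\hat p$ minimize $\ddec(q,\cdot)$. Since $\ddec(q,x)\le\|q-x\|_\infty$ for every $x$, we get $\ddec(q,\hat p)\le\ddec(q,p^*)\le r$. Now take the coordinate $i=i_{\hat p,p^*}\in S_{\hat p}$, which satisfies $|\hat p_i-p^*_i|=\|\hat p-p^*\|_\infty$. By the triangle inequality on that coordinate,
\[
\|\hat p-p^*\|_\infty=|\hat p_i-p^*_i|\le|\hat p_i-q_i|+|q_i-p^*_i|\le \ddec(q,\hat p)+r\le 2r .
\]
A second triangle inequality, now in $\ell_\infty$, gives $\|q-\hat p\|_\infty\le\|q-p^*\|_\infty+\|p^*-\hat p\|_\infty\le 3r$.

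\textbf{Main obstacle.} The only subtle point is that $\ddec(q,\hat p)$ is small only on the coordinates in $S_{\hat p}$. The argument therefore has to route through a coordinate of $S_{\hat p}$, namely the one witnessing $\|\hat p-p^*\|_\infty$, instead of trying to bound $\|q-\hat p\|_\infty$ directly. This is exactly why the definition of $S_x$ must include the argmax coordinate for every other dataset point. Everything else is bookkeeping; the case $\hat p=p^*$ is trivial.
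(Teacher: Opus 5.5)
Your proposal is correct and follows essentially the same argument as the paper, which proves Lemma~\ref{lem:3-approx} and specializes it to $\calS = X$. Both store a distance-witnessing coordinate for every pair and output the minimizer of $\ddec$. Both then route the triangle inequality through the coordinate $i_{\hat p,p^*}\in S_{\hat p}$ to get $\|\hat p-p^*\|_\infty\le 2r$, and hence $\|q-\hat p\|_\infty\le 3r$.
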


To improve query time beyond $O(n^2)$, we generalize the concept of deciding coordinates into what we call \textit{subset embeddings}.
A subset embedding is a set of coordinates $S \subseteq [d]$ that preserves pairwise distances for the entire dataset up to a distortion factor $c$.
Formally, $S$ is an $\ell_\infty$ subset embedding with distortion $c$ if for all $x, y \in X$
\[
\|x_S - y_S\|_\infty \ge \frac{1}{c} \|x - y\|_\infty
\]
where $x_S$ is the restriction of vector $x$ to the coordinates in $S$.
Note that, because $x_S$ is a restriction of $x$, $\|x_S - y_S\|_\infty \leq \|x-y\|_\infty$ for all $S\subseteq [d]$.

A natural question is how small we can make subset embeddings with a given distortion.

\begin{question}
    \label{question:subset}
    For a given $c \geq 1$ and an arbitrary dataset, what is the smallest number of coordinates needed to preserve all pairwise $\ell_\infty$ distances in the dataset up to a factor of $c$?
\end{question}
We observe a lower bound for this question: there exists a dataset for which any finite distortion subset embedding has size $\Omega(n)$ (Lemma \ref{lem:lb-finite-dist}).
For upper bounds, since there are only $O(n^2)$ many pairs of points, there is always an isometric subset embedding with size $O(n^2)$.
Thus, for any $c$, the optimal size must lie between $\Omega(n)$ and $O(n^2)$.

We improve both these upper and lower bounds, and answer Question \ref{question:subset} with matching upper and lower bounds: for any $c\geq 1$, we show there is a subset embedding with distortion $O(c)$ and size at most $n^{1+1/c}$, which is tight up to constants in the distortion factor.

\begin{restatable}{theorem}{subsetemb}
\label{thm:subset-embed-imp}
    For any $n$ point dataset $X \subseteq \mathbb{R}^{d}$ and distortion parameter $c\geq 1$, there exists an $\ell_\infty$ subset embedding for $X$ with distortion $O(c)$ consisting of $n^{1+1/c}$ coordinates.
\end{restatable}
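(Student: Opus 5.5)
The construction would follow the greedy-spanner and Thorup--Zwick ball-growing arguments, which give $n^{1+1/c}$-size objects with stretch $O(c)$. The guiding idea is that a single coordinate chosen to witness $\|u-y\|_\infty$ for a ``center'' $u$ also witnesses, up to a constant factor, the distance from $y$ to every point $x$ near $u$. If $|u_i-y_i| = \|u-y\|_\infty$ and $\|x-u\|_\infty \le \rho$, then
\[
|x_i-y_i| \ge \|u-y\|_\infty - \rho \ge \|x-y\|_\infty - 2\rho .
\]
So whenever $\rho \le \|x-y\|_\infty/4$, the pair $(x,y)$ is preserved with distortion $2$ through $u$'s coordinate. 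This is the same 3-approximation-style triangle argument as in Observation~\ref{obs:3-approx}. The plan is therefore to pay for coordinates only between cluster centers and a few nearby points, and to let clusters absorb all other pairs.

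\textbf{Step 1 (scales).} Group pairs by scale: a pair $(x,y)$ has scale $D=2^j$ if $\|x-y\|_\infty\in[D,2D)$. At each scale $D$, I would carve $X$ into clusters by ball growing in the $\ell_\infty$ metric on $X$. Pick an unclustered point $u$ and consider radii $r_k = kD/(8c)$ for $k=0,1,\dots,c$. Since $|B(u,r_c)|\le n$, some $k<c$ satisfies
\[
|B(u,r_{k+1})| \le n^{1/c}\,|B(u,r_k)| .
\]
Declare $C=B(u,r_k)\cap(\text{unclustered})$ a cluster with radius at most $D/8$, and remove it. For every point $y$ in the shell $B(u,r_{k+1})\setminus B(u,r_k)$ and every $x\in C$, add one coordinate witnessing $\|x-y\|_\infty$. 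This costs at most $n^{1/c}|C|^2$ coordinates, which is too many, so instead I would add only the witness of $\|u-y\|_\infty$, at a cost of $n^{1/c}|C|$. Pairs with one endpoint in $C$ and the other in the shell are then preserved with distortion $O(1)$ by the displayed inequality. Summing over clusters gives $n^{1+1/c}$ coordinates per scale.

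\textbf{Step 2 (pairs that cross the shell).} The remaining pairs at scale $D$ are those with $x\in C$ and $y$ lying outside $B(u,r_{k+1})$, or lying in a later cluster. For these I would use the greedy alternative and argue by ordering. Process clusters in carving order and, for the pair $(x,y)$, look at whichever endpoint's cluster was carved first. If $y$ was still unclustered when $C$ was carved and lies outside the shell, then $\|u-y\|_\infty \ge r_{k+1}$. Pairs of scale $D$ with $u$ far away are re-examined inside the cluster of $y$: there I add the witness between $y$'s center $u'$ and $u$ (center-to-center), charged to the shell of $u'$. The gap $D/(8c)$ between consecutive radii is what caps the loss at $O(c)$ rather than $O(1)$, giving distortion $O(c)$ overall.

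\textbf{Step 3 (removing the $\log$ over scales).} This is the step I expect to be the main obstacle. A naive union over the $O(\log(\text{aspect ratio}))$ scales costs an extra logarithmic factor, whereas the theorem claims exactly $n^{1+1/c}$. I would first try the purely greedy version. Sort pairs by decreasing distance, and add the witness coordinate of $(x,y)$ only if the current $S$ gives $\|x_S-y_S\|_\infty < \|x-y\|_\infty/O(c)$. Then charge each added coordinate to an edge of an auxiliary graph and show that this graph has girth larger than $2c$. A cycle of at most $2c$ earlier-added (hence longer) edges through $x$ and $y$ should, via the ball-growing triangle inequality above, already certify a coordinate of $S$ separating $x$ and $y$ by $\|x-y\|_\infty/O(c)$. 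The classical bound then gives at most $n^{1+1/c}$ edges for a graph of girth greater than $2c$. Making that cycle-to-certificate implication rigorous in $\ell_\infty$, where triangle inequalities naturally give upper rather than lower bounds on $\|x_S-y_S\|_\infty$, is the delicate point. If it fails, I would fall back to the scale-based clustering with geometrically decreasing budgets per scale.
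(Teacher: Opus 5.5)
Your proposal does not yet contain a proof. The only place the bound $n^{1+1/c}$ could come from is the girth claim in Step 3, which you leave open, and the scale-based fallback in Steps 1--2 does not work.

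In Step 1 the shell lies within $r_{k+1}\le D/8$ of $u$. So every pair between $C$ and the shell has distance below $D/4$ and is never a scale-$D$ pair; nor is $\rho\le\|x-y\|_\infty/4$ guaranteed for it. Hence every scale-$D$ pair lands in Step 2, and nothing there bounds the number of center-to-center witnesses. The charge ``to the shell of $u'$'' is invalid, because $\|u-u'\|_\infty> D-D/4>r_{k'+1}$. If all pairwise distances lie in one dyadic range, as in the instance of Lemma~\ref{lem:lb-dist-2}, every cluster is a singleton. Step 2 is then just the original problem and may add $\binom{n}{2}$ coordinates. The many scales are not the difficulty; a single scale already contains all of it.

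Step 3 is the right construction: it is exactly the paper's greedy routine. But the mechanism you suggest for the girth (``via the ball-growing triangle inequality'') cannot work. The path edges were added earlier and are \emph{longer} than $\|x-y\|_\infty$, so there is no small $\ell_\infty$ radius to plug in.

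The missing idea is to focus on the \emph{oldest} edge $e_s=\{w_1,w_2\}$ of the cycle and its witness coordinate $i_s$, with $D_s=|w_{1,i_s}-w_{2,i_s}|$.
\begin{itemize}
\item Every other edge $\{a,b\}$ of the cycle was added after $i_s$ entered $S$, so it was unresolved with respect to a set containing $i_s$.
\item By the decreasing order, it also has length at most $D_s$.
\item Hence $|a_{i_s}-b_{i_s}|<\|a-b\|_\infty/\alpha\le D_s/\alpha$.
\item Summing these along the rest of a cycle of length $g$, on the single coordinate $i_s$, gives $D_s<(g-1)D_s/\alpha$. This is impossible when $g\le\alpha+1$.
\end{itemize}
In your own phrasing: a path of $k\le\alpha$ earlier edges from $x$ to $y$ gives $|x_{i_s}-y_{i_s}|>D_s(\alpha-k+1)/\alpha\ge\|x-y\|_\infty/\alpha$, so $(x,y)$ is already resolved.

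The triangle inequality is used here only as an upper bound on one coordinate, so your concern about its direction disappears. With $\alpha=2\lceil c\rceil-1$ the girth is at least $2\lceil c\rceil+1$. The Moore bound then gives at most $\tfrac12\bigl(n^{1+1/\lceil c\rceil}+n\bigr)\le n^{1+1/c}$ coordinates, with no scale decomposition at all.
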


Unlike the Johnson-Lindenstrauss transform or a tree embedding, a subset embedding may only select a subset of the original coordinates, and a coordinate selected for one pair of points may be useless for every other pair. Our selection procedure is greedy: as long as some pair of points is not yet approximated within the target factor, we take the furthest such pair and select a coordinate realizing its distance.
This guarantees the distortion by definition, so the entire difficulty lies in bounding the number of coordinates selected. To do so, we connect the procedure to a witness graph over the dataset, where each edge is the pair of points responsible for one selected coordinate. Interestingly, we are able to show that the greedy order forces this graph to have girth $c$. Since a graph over $n$ nodes with girth $c$ has at most $n^{1+O(1/c)}$ edges \cite{alon2002moore}, so does the witness graph, and therefore the size of our subset embedding is bounded by  $n^{1+O(1/c)}$.

We also show that this construction is tight up to constants in the distortion.
\begin{theorem}
    \label{thm:subset-embed-lower-bound}
    For $c,n$ larger than fixed constants, there exists a dataset of $n$ points for which any $\ell_\infty$ subset embedding of distortion at most $c$ must include $n^{1+\Omega(1/c)}$ coordinates.
\end{theorem}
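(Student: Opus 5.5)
We will build the hard dataset from a dense graph $G=(V,E)$ with $|V|=n$, girth greater than $2c+1$ (or a suitable constant multiple of $c$), and $|E|=n^{1+\Omega(1/c)}$. Such graphs exist for $c$ and $n$ above fixed constants, by the probabilistic deletion method or by explicit constructions such as Lazebnik--Ustimenko--Woldar. The dataset has one point $x^v\in\mathbb{R}^{E}$ for each vertex $v$, so $d=|E|$. The coordinates will be set up so that each edge $e=\{u,w\}$ has its own coordinate, and that coordinate is the only one that can certify a large distance between $u$ and $w$.

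The natural choice is graph-metric coordinates. Fix an orientation of each edge $e=(u,w)$ and put $x^v_e = d_G(v,u)-d_G(v,w)$. Alternatively, use $x^v_e = \min(d_G(v,u),k)$ with a truncation parameter $k\approx c$ chosen so that $u$ and $w$ differ by exactly $1$ in coordinate $e$. The plan needs two properties.

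\begin{enumerate}
\item \textbf{Large true distances.} For every edge $\{u,w\}$, $\|x^u-x^w\|_\infty \ge D$ for some $D$.
\item \textbf{Only $e$ certifies edge $e$.} For every coordinate $f\neq e$, $|x^u_f-x^w_f|<D/c$.
\end{enumerate}

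Given these, any subset embedding $S$ of distortion $c$ must contain every coordinate $e\in E$, so $|S|\ge |E|= n^{1+\Omega(1/c)}$.

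The construction above does not give property 2 directly: with graph distances, every coordinate is $1$-Lipschitz along edges, so all coordinates look alike on adjacent pairs. I will therefore amplify edge coordinates in a way that uses the girth. For edge $e=\{u,w\}$, define $x^v_e$ as a ``tent'' around $e$. Let $x^u_e=+D$ and $x^w_e=-D$. On the side of $u$ (the vertices closer to $u$ than to $w$ after removing $e$), let the value decay linearly as $\max(0, D - (D/c)\,d_{G-e}(v,u))$, and symmetrically with $-D$ on the side of $w$.

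Girth greater than $2c+1$ makes the radius-$c$ neighbourhood of $e$ a tree. Hence the two sides are well defined within distance $c$, and both tents vanish before they meet. Because the tents are piecewise linear with slope $D/c$ in graph distance, any other edge $\{a,b\}$ changes coordinate $e$ by at most $D/c$. Only $e$ itself crosses from $+D$ to $-D$, giving $|x^u_e-x^w_e|=2D$. To make the inequality strict I will use slope $D/(c+1)$, so the other coordinates give at most $D/(c+1)<2D/c$.

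That gives property 2. Property 1 comes from coordinate $e$ itself, which gives $\|x^u-x^w\|_\infty\ge 2D$, so the edge pair needs distance exactly $2D$ witnessed only by $e$. I also need $\|x^u-x^w\|_\infty$ not to be inflated by other coordinates, but any such inflation only helps the lower bound: the pair still needs some coordinate of value at least $2D/c$, and only $e$ provides it.

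The main obstacle is making the tent well defined and $D/(c+1)$-Lipschitz along every edge of $G$, including edges far from $e$ and edges where the two sides could re-merge through long cycles. I would first define the side function via distances in $G-e$. Girth then guarantees $|d_{G-e}(v,u)-d_{G-e}(v,w)|$ is large (at least $2c+1-\ldots$) whenever either is at most $c+1$, so within the support at most one tent is nonzero. Outside the support both terms are $0$, and the Lipschitz property follows from the triangle inequality in $G-e$.

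The remaining concern is the edge $e$ itself, which is where the big jump is intended. Finally I will cite the girth/density bound, that graphs with girth $g$ and $n^{1+\Omega(1/g)}$ edges exist, to conclude.
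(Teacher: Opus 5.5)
Your construction is correct and is essentially the paper's proof. The shared idea is one coordinate per edge of a girth-$\Theta(c)$ graph, defined through distances in $G-e$: girth forces a large jump on $e$ itself, while every other edge survives in $G-e$ and so moves that coordinate only by a Lipschitz amount. One small correction: at girth exactly $2c+2$, an edge joining the rims of your two tents moves coordinate $e$ by $2D/(c+1)$, not $D/(c+1)$. This is still below $2D/c$, so nothing breaks.

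The tents, the truncation and the two sides are unnecessary. The paper simply takes $x^v_e=d_{G-e}(v,u_e)$ with girth $\lambda\ge c+2$. This is automatically $1$-Lipschitz along every edge other than $e$ and jumps by at least $\lambda-1\ge c+1$ on $e$. So the obstacle you flag disappears, and the girth requirement is roughly halved.
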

The proof of Theorem \ref{thm:subset-embed-lower-bound} uses similar ideas to the upper bound: from any graph with $n$ nodes, $m$ edges and girth $\lambda=\Theta(c)$, we construct a dataset of $n$ points such that any subset embedding with distortion at most $c$ must include $m$ coordinates.
The lower bound then follows from the classic result that there are graphs with girth $\lambda$ and $n^{1+\Omega(1/\lambda)}$ edges \cite{erdos1963regulare}.
See Section \ref{sec:subset-embed-lower} for the full proof.

Given subset embeddings, we now have two methods for constructing ultra-high dimensional $\ell_\infty$ ANN data structures.
First, we could simply compute a subset embedding with distortion $O(c)$ which uses $d'=n^{1+1/c}$ coordinates, store these coordinates, and apply the high-dimensional ANN data structure of \cite{indyk2001approximate}.
This approach yields query time $O(d'\polylog n) = \ot{n^{1+1/c}}$ and approximation $O(c\log \log d') = O(c\log\log n)$, from the guarantees of the data structure of \cite{indyk2001approximate}.\footnote{This direct approach, however, has space which exceeds $O(n^2\log d)$; as such, a few additional steps are needed to reduce the space (see Section \ref{sec:subset-ann} for details).} This approach will allow us to prove Theorem \ref{thm:small-query-time}.

Second, we can use the $O(n^2)$ sized subset embedding, with distortion $1$, implied by deciding distance.
However, to reduce the query time from $O(n^2)$, we employ a divide-and-conquer technique to repeatedly filter far points, at the cost of additional approximation.
This allows us to prove Theorem \ref{thm:c-258}; see Section \ref{sec:c-258} for details.

\subsection{Related Work}
\paragraph{High-Dimensional $\ell_\infty$ ANN} \cite{indyk2001approximate} gives an ANN data structure for $\ell_\infty$ with approximation $O(\log_{1+\rho} \log d)$, space $d \cdot n^{1+\rho} \cdot \polylog n$ and query time $d \cdot \polylog n$, for a parameter $\rho > 0$.
Of particular note, this implies an $O(\log \log d)$-approximation with space $O(n^{1.01}d)$ and query time $d\cdot\polylog n$.
\cite{linflower2008} show that, for data structures which are decision trees or have constant cell-probe complexity, approximation $O(\log \log d)$ is the best possible with polynomial space and sublinear query time in $n$. 

\paragraph{Embeddings} Embeddings are popular techniques for solving geometric problems. Besides JL, readers who are familiar with other embeddings may wonder if they are helpful for the ultra-high dimensional ANN problem.
In fact, a na\"ive attempt with any embedding that relies on the entire query requires $\Omega(d)$ time to compute, which is prohibitively large in the ultra-high dimensional setting.
For example, tree embeddings in \cite{bartal1996probabilistic,frt2003tight} embed any metric space into a tree metric space of depth upper-bounded by logarithmic aspect ratio $\log \Delta$ (thus into $\ell_1$ where each embedding vector has only $\log \Delta$ effective entries, which is independent of $d$).
However, computing the tree embeddings requires computing distances in the original space, which costs $\Omega(d)$.
Moreover, the tree embeddings are data-dependent, meaning their distance distortion guarantees are between pairs only in the preprocessed dataset.
We will introduce embeddings (subset embeddings, see Section \ref{sec:subset-embed}) which overcome these two challenges; the time to compute the embedding for a query does not depend on dimensions and it guarantees an approximation for ANN. 

\paragraph{Feature Selection} 
There is also the related technique of feature selection.
The goal of feature selection is to select a small subset of the most relevant features (coordinates/dimensions) from a high-dimensional dataset while preserving the essential structure of the data for a specific downstream problem.
For $k$-means clustering, efficient and accurate deterministic \cite{Boutsidis_2013, cohen2015dimensionalityreductionkmeansclustering} and randomized \cite{randomfeatureselection1,boutsidis2014randomizeddimensionalityreductionkmeans,cohen2015dimensionalityreductionkmeansclustering} algorithms have been developed\footnote{These feature selection techniques for clustering preserve only the total clustering cost, and so do not directly yield good nearest neighbor guarantees.
However, \cite{mpii-ultra} (Appendix F) shows how to use those techniques designed for clustering in ultra-high dimensional ANN in $\ell_2$.}.
For ultra-high dimensional ANN, the importance sampling over coordinates in \cite{mpii-ultra} can be viewed as a randomized feature selection for $\ell_p$ with small $p$.
Our subset embeddings, on the other hand, deterministically select a subset of the features for $\ell_\infty$; see Section \ref{sec:subset-embed} for the full details of our embeddings.

\section{Preliminaries}

Throughout, we work in the word-RAM model, and thus assume for all  points $x$ given to the algorithm, $x_i$ can be written in $O(1)$ words for all $i$.

We assume that the data structure stores a list $\{i_1,\ldots,i_k\}\subseteq [d]$ of coordinates, and on query $q$, the query algorithm is passed only the values of $q$ restricted to this list.
In this way, the complexity of accessing a given coordinate $q_{i_j}$ is $O(\log k)$, rather than $O(\log d)$.

For simplicity in describing our data structures, since all our query algorithms choose coordinates from a set of at most $O(n^2)$ options (and thus $k=O(n^2)$), we assume we can access an entry of the query in time $O(\log n)$.

Since we often consider only a subset of coordinates of a vector, we use the notation $x_S$ to refer to $x$ restricted to the coordinates of $S$.
\begin{definition}[Subset Indexing]
    \label{def:subset-index}
    For any $x\in \mathbb{R}^{d}$ and $S\subseteq [d]$, where $S=\{i_1,\ldots ,i_k\}$ and $i_1 < i_2 <\ldots < i_k$, we write $x_S = (x_{i_1}, x_{i_2}, \ldots ,x_{i_k})$.
\end{definition}

We will use the following $\ell_\infty$ ANN data structure by \cite{indyk2001approximate}.

\begin{theorem}[\cite{indyk2001approximate}]
    \label{thm:i01}
    There exists a data-structure for $\ell_\infty$ ANN with
    \begin{itemize}
        \item Approximation $O(\log \log d)$;
        \item Query time $O(d\polylog n)$; and
        \item Storage space $O(n^{1.01}d)$
    \end{itemize}
\end{theorem}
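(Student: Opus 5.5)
The statement immediately above is Theorem~\ref{thm:i01}, which is imported from \cite{indyk2001approximate}. The paper does not reprove it; I would only reconstruct Indyk's argument at the level below, with the constant $\rho$ chosen so that $n^{1+\rho}\le n^{1.01}$. The plan has two parts. First, reduce ANN to the decision version (the ``$r$-near neighbor'' problem) by the standard reduction, with a $\polylog n$ overhead in query time and only a small factor in space. Second, solve the decision version for $r=1$ by a recursive, data-dependent decision tree on coordinate thresholds.

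For a node holding a set $P$ of $m$ points, I would try each coordinate $i$ together with the geometric sequence of radii $\alpha_k=\alpha_0(1+\rho)^{k}$. I look for a threshold $v$ with the following property. Set $L=\{x\in P: x_i< v\}$ and $R=\{x\in P: x_i> v+\alpha_{k}\}$, and let the ``middle'' slab of width $\alpha_{k+1}-\alpha_k$ be assigned to both children. I require the sizes to satisfy the Indyk-type condition
\[
\left(\tfrac{|L|+|M|}{m}\right)^{1+\rho}+\left(\tfrac{|R|+|M|}{m}\right)^{1+\rho}\le 1 .
\]
When such a cut exists, the node stores $(i,v)$ and recurses on the two overlapping children. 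A query $q$ reads $q_i$ and descends into exactly one child. The child is chosen so that every point within distance $1$ of $q$ is kept, because the overlap slab absorbs the $\pm1$ uncertainty.

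When no coordinate admits such a cut, the node stores a single representative point. The claim to prove here is that every point of $P$ is then within $O(\log_{1+\rho}\log d)$ of the representative. The argument goes as follows. In each coordinate, failing the condition at every scale forces the fraction of points outside distance $\alpha_k$ of a median value to shrink doubly exponentially, roughly like $m^{-(1+\rho)^k}$. After $O(\log_{1+\rho}\log d)$ scales this fraction is below $1/(dm)$, and it is enough to take $\alpha_0$ and the number of scales so that the final fraction is below $1/(2d)$. A union bound over the $d$ coordinates then yields one point, the representative, that is near the median in every coordinate. Moreover, no point of $P$ is far from it in any coordinate. This step gives the approximation factor.

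The space bound comes from the convexity condition: by induction, a node with $m$ points uses $O(m^{1+\rho}\cdot d)$ space. The depth is $O(\log n)$, since each child has at most a $(1-\Omega(1/\log n))$-fraction, or else a constant fraction, of the points. So the query time is $O(d\log n)$ per decision instance, giving $O(d\polylog n)$ after the reduction.

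I expect the main obstacle to be the ``no good cut'' case. There one must pick the radii sequence and the exponent bookkeeping so that the doubly exponential decay genuinely gives $O(\log_{1+\rho}\log d)$ scales and not $\log d$. One must also handle the overlap correctly so that the recursion neither blows up the space nor loses the true near neighbor.
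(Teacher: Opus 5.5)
The paper does not prove this statement. It is imported from \cite{indyk2001approximate}, and the citation is its entire justification. Your reconstruction, however, has a genuine gap at exactly the step you single out as the crux.

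Consider a slab whose upper tail fractions are $a$ (at or above the slab) and $b$ (above it). Failure of your inequality there means $(1-b)^{1+\rho}+a^{1+\rho}>1$, hence $b<a^{1+\rho}$. Starting from the median ($p_0\le 1/2$), this gives $p_k<2^{-(1+\rho)^k}$, not $m^{-(1+\rho)^k}$. So after $k=O(\log_{1+\rho}\log d)$ scales every tail is below $1/(4d)$. The union bound over the $2d$ tails then shows only that more than half of $P$ lies in a box $B$ of radius $O(k)$ around the coordinatewise median. The remaining points may be arbitrarily far from it.

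Guaranteeing that every tail is empty needs $p_k<1/m$, i.e.\ about $\log_{1+\rho}\log m$ scales. That yields only an $O(\log\log n)$ approximation, not $O(\log\log d)$. So ``store one representative and stop'' is incorrect: a query whose only near neighbor is an outlier gets an unboundedly bad answer.

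The missing idea is to recurse on the outliers. Store $B$ and a representative of $C=P\cap B$, and build the structure recursively on $P\setminus C$, which has fewer than $m/2$ points. The query reads all of $q$ in $O(d)$ time. If $q$ is within distance $1$ of $B$, it returns the representative, which is then within $O(k)$ of $q$. Otherwise no point of $C$ is within distance $1$ of $q$, so it descends into $P\setminus C$. This happens at most $\log n$ times along any path, and it accounts for the $O(d\log n)$ term in the query time.

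Two other steps also need repair.

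First, the thresholds must be spaced arithmetically, a fixed width $2$ apart. This is twice the near radius, so the overlap absorbs the $\pm1$ uncertainty, and the doubly exponential behaviour is carried by the fractions rather than by the radii. With your geometric radii $\alpha_k=\alpha_0(1+\rho)^k$, the distance covered after $K=\log_{1+\rho}\log d$ scales is $\alpha_0((1+\rho)^K-1)\approx\alpha_0\log d$. Since the first slab already needs width at least $2$, this is an $\Omega(\log d)$ approximation.

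Second, the depth is not $O(\log n)$. Your inequality holds for a cut that splits off a single point with an empty middle slab, so as stated the depth can be $\Theta(m)$. The fix is to scan each coordinate outward from the median only while the tail fraction is at least $1/(4d)$, which is all the union bound uses. Then each child of a cut has at most a $(1-1/(4d))$ fraction of the points. The depth is therefore $O(d\log n)$, which still fits the $O(d\polylog n)$ query bound.
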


\section{Subset Embeddings}
\label{sec:subset-embed}

In this section, we introduce the concept of subset embeddings, which play a crucial role in our ANN data structure constructions.
Standard dimensionality reduction techniques, such as the Johnson-Lindenstrauss transform, often require projecting data onto a dense set of new basis vectors.
However, in the ultra-high dimensional setting, reading even a single full-dimensional vector at query time is prohibitively expensive.
Thus, we seek embeddings that select a small subset of the original coordinates while preserving the distance structure of the dataset. This is similar to the notion of feature selection studied in the clustering and machine learning literature, but our goal is to preserve pairwise distances.

Recall from Definition \ref{def:subset-index} that we define $x_S$ to be the $|S|$-dimensional vector of $x$ at the coordinates given in $S$, for any $S\subseteq [d]$ and $x\in \mathbb{R}^{d}$.
\begin{definition}[$\ell_\infty$ Subset Embedding]
    Given $n$ points $X \subseteq \mathbb{R}^{d}$, a set $S\subseteq [d]$ induces an \textit{$\ell_\infty$ Subset Embedding} for $X$ with dimension $|S|$ and distortion $c\geq 1$ if
    \[
        \text{for all } x,y\in X,\ \|x_S - y_S\|_\infty \geq \frac{1}{c}\cdot \|x-y\|_\infty.
    \]
\end{definition}
Note that by definition of $\ell_\infty$, for any set $S$, $\|x_S - y_S\|_\infty \leq \|x - y\|_\infty$.
We often refer to $\ell_\infty$ subset embeddings simply as subset embeddings, as we only consider $\ell_\infty$ distance.

Like standard dimensionality reduction, subset embeddings enable efficient distance estimation between dataset points. However, in ANN applications, the query point is unknown during preprocessing when the subset embedding is constructed. The following lemma shows that distortion guarantees for subset embeddings extend to ANN approximation guarantees: an approximate nearest neighbor under the subset embedding remains a good approximation to the true nearest neighbor.

\begin{lemma}
    \label{lem:subset-to-ann}
    Consider $X\subset \mathbb{R}^{d}$, and suppose $S\subseteq [d]$ induces an $\ell_\infty$ Subset Embedding for $X$ with distortion $\alpha\geq 1$.
    Define $X_S = \{x_S \mid x\in X\}$, and fix some point $q\in \mathbb{R}^{d}$.
    Let $p^{*}\in X$ be the nearest neighbor (under $\ell_\infty$) to $q$, and consider any $\hat{p}\in X$ such that $\hat{p}_S$ is a $\beta$-approximate nearest neighbor to $q_S$ in $X_S$, for some $\beta \geq 1$.
    Then,
    \[
        \|q-\hat{p}\|_\infty \leq (\alpha(\beta + 1)+1)\|q-p^{*}\|_\infty
    \]
\end{lemma}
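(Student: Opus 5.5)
The argument is a chain of triangle inequalities, using the distortion guarantee only on the pair $(\hat{p},p^{*})$, since both lie in $X$ and the query $q$ does not. Write $r=\|q-p^{*}\|_\infty$. The plan has three steps.

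\emph{Step 1: relate $\hat{p}$ to $p^{*}$ inside $S$.} Since $\hat{p}_S$ is a $\beta$-approximate nearest neighbor to $q_S$ in $X_S$ and $p^{*}_S\in X_S$, we get
\[
\|q_S-\hat{p}_S\|_\infty \le \beta\|q_S-p^{*}_S\|_\infty \le \beta r .
\]
Here the last inequality holds because restricting to coordinates in $S$ cannot increase the $\ell_\infty$ norm. The triangle inequality in $\mathbb{R}^{|S|}$ then gives
\[
\|\hat{p}_S-p^{*}_S\|_\infty \le \|\hat{p}_S-q_S\|_\infty+\|q_S-p^{*}_S\|_\infty \le (\beta+1)r .
\]

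\emph{Step 2: lift to all coordinates.} Both $\hat{p}$ and $p^{*}$ are in $X$, so the subset embedding with distortion $\alpha$ applies:
\[
\|\hat{p}-p^{*}\|_\infty \le \alpha\|\hat{p}_S-p^{*}_S\|_\infty \le \alpha(\beta+1)r .
\]

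\emph{Step 3: conclude.} By the triangle inequality in $\mathbb{R}^d$,
\[
\|q-\hat{p}\|_\infty \le \|q-p^{*}\|_\infty+\|p^{*}-\hat{p}\|_\infty \le (\alpha(\beta+1)+1)r ,
\]
which is the claimed bound.

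The only delicate point is structural rather than computational. The distortion guarantee cannot be applied to $q$, since $q\notin X$, so the comparison must pass through the pair $(\hat{p},p^{*})$ of dataset points. Routing it this way is exactly what produces the $+1$ terms. Beyond that, everything is routine.
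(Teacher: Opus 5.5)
Your proof is correct and follows essentially the same route as the paper: both apply the distortion guarantee only to the dataset pair $(\hat{p},p^{*})$, bound $\|\hat{p}_S-p^{*}_S\|_\infty$ by the triangle inequality through $q_S$ together with the $\beta$-approximation, and then add $\|p^{*}-q\|_\infty$. The only difference is ordering: you bound $\|\hat{p}_S-p^{*}_S\|_\infty\le(\beta+1)r$ before multiplying by $\alpha$, whereas the paper writes it as one chain.
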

\begin{proof}
    Since $S$ is a subset embedding of $X$ with distortion $\alpha$,  $\|\hat{p} - p^{*}\|_\infty \leq \alpha \|\hat{p}_S - p^{*}_S\|_\infty$.
    In addition, since $\hat{p}_S$ is a $\beta$-approximate nearest neighbor to $q_S$ in $X_S$, for all $x\in X$, $\|\hat{p}_S-q_S\|_\infty \leq \beta \|x_{S}-q_S\|_\infty$.
    So, repeatedly applying the triangle inequality and with $r=\|p^{*} - q\|_\infty$,
    \begin{align*}
        \|\hat{p} - q\|_\infty &\leq \|\hat{p}-p^{*}\|_\infty + \|p^{*}-q\|_\infty \\
                             &\leq \alpha \|\hat{p}_S - p^{*}_S\|_\infty + r \\
                             &\leq \alpha \|\hat{p}_S - q_S \|_\infty + \alpha \|q_S - p^{*}_S\|_\infty + r \\
                             &\leq \alpha\beta \|p^{*}_S - q_S\|_\infty + \alpha \|q_S - p^{*}_S\|_\infty + r \\
                             &=\alpha(\beta + 1)\|p^{*}_S - q_S\|_\infty + r \\
                             &\leq (\alpha(\beta + 1) + 1)r
    \end{align*}
    with the final inequality following from $\|q_S - p^{*}_S\|_\infty \leq \|q-p^{*}\|_\infty = r$.
\qedhere

\end{proof}

\subsection{Results for \texorpdfstring{$\ell_\infty$}{} Subset Embeddings}
In this subsection, we give some simple constructions and lower bounds for $\ell_\infty$ subset embeddings.
We show that there is an isometric subset embedding with $n(n-1)/2 = O(n^2)$ dimensions, and that no subset embedding with distortion less than 2 can have smaller dimension.
We also show that $\Omega(n)$ dimensions are necessary for any finite distortion.
Finally, we state Theorem \ref{thm:subset-embed-imp}, our main technical result for subset embeddings, which shows that for any $c\geq 1$, there exists a subset embedding with distortion $O(c)$ and dimension $n^{1+1/c}$.
We give the full construction and proof of Theorem \ref{thm:subset-embed-imp} in the next subsection.
\begin{lemma}
    \label{lem:isometric-subset}
    For any $n$ point data set $X \subseteq \mathbb{R}^{d}$, there exists an $\ell_\infty$ subset embedding for $X$ with dimension at most $n(n-1)/2 = O(n^2)$ and distortion 1.
\end{lemma}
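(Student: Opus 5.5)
The construction picks one witnessing coordinate per unordered pair of points. For each pair $\{x,y\}\subseteq X$ with $x\neq y$, let $i_{xy}\in[d]$ be any coordinate attaining the $\ell_\infty$ distance, i.e.\ $|x_{i_{xy}}-y_{i_{xy}}| = \|x-y\|_\infty$. Such a coordinate exists because $\|x-y\|_\infty=\max_{i\in[d]}|x_i-y_i|$ is a maximum over a finite set. Define
\[
S=\{\, i_{xy} : x,y\in X,\ x\neq y \,\}.
\]
There are $\binom{n}{2}=n(n-1)/2$ unordered pairs, and several pairs may share a coordinate, so $|S|\le n(n-1)/2$.

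Next I verify that the distortion is $1$. Fix $x,y\in X$. If $x=y$, both sides of the defining inequality are $0$. Otherwise $i_{xy}\in S$, so
\[
\|x_S-y_S\|_\infty\ \ge\ |x_{i_{xy}}-y_{i_{xy}}| \;=\; \|x-y\|_\infty .
\]
This is exactly the subset embedding condition with $c=1$. Combined with the trivial bound $\|x_S-y_S\|_\infty\le\|x-y\|_\infty$ noted after the definition, the embedding is in fact isometric on $X$.

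I do not expect a real obstacle here. The only points to check are that the maximizing coordinate exists, which follows from finiteness of $[d]$, and that coincident points or repeated coordinates cause no trouble; they only make $S$ smaller.
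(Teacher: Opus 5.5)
Your proof is correct and is essentially the paper's argument. The paper builds $S$ iteratively, adding a maximizing coordinate only for pairs whose distance is not yet preserved. The key point is the same as in your one-shot construction: at most one witnessing coordinate per unordered pair, hence $|S|\le n(n-1)/2$ with distortion $1$.
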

\begin{proof}
    Construct the set $S$ which induces the embedding in the following way.
    Initialize $S=\emptyset$.
    Select any two points $u,v\in X$ such that $\|u_S - v_S\|_\infty < \|u-v\|_\infty$, and add to $S$ a coordinate $i$ such that $|u_i - v_i| = \|u - v\|_\infty$; by definition of $\ell_\infty$, such a coordinate always exists.
    Repeat until for all $x,y\in X$, $\|x_S - y_S\|_\infty = \|x-y\|_\infty$.

    By construction, the embedding has distortion 1.
    Moreover, for each pair $x,y\in X$, at most one coordinate is added to $S$.
    Since there are $\binom{n}{2}=n(n-1)/2$ pairs of points, it follows that $|S|\leq n(n-1)/2$.
\end{proof}

If one desires distortion less than 2, it turns out the simple isometric embedding of Lemma \ref{lem:isometric-subset} is tight.
\begin{lemma}
    \label{lem:lb-dist-2}
    There exist $n$ points $X\subseteq \mathbb{R}^{d}$ such that any $\ell_\infty$ subset embedding of $X$ with distortion $c < 2$ must have dimension at least $n(n-1)/2$.
\end{lemma}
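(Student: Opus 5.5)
We will build an explicit dataset with $d = \binom{n}{2}$ coordinates, one per unordered pair of points. For each pair $\{u,v\}$ of indices in $[n]$ we dedicate a coordinate $i_{uv}$. In that coordinate we set $x^{(u)}_{i_{uv}} = 1$, $x^{(v)}_{i_{uv}} = -1$, and $x^{(w)}_{i_{uv}} = 0$ for every other index $w \notin\{u,v\}$. The plan is to show that the only coordinate achieving at least half of $\|x^{(u)}-x^{(v)}\|_\infty$ is $i_{uv}$ itself. It then follows that any subset embedding with distortion $c<2$ must contain every $i_{uv}$, which gives dimension at least $n(n-1)/2$.

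First we compute the true distances. For $u\neq v$, the coordinate $i_{uv}$ gives $|1-(-1)|=2$. Every other coordinate $i_{ab}$ with $\{a,b\}\neq\{u,v\}$ has entries in $\{-1,0,1\}$ for $x^{(u)}$ and $x^{(v)}$, with at most one of the two points lying in $\{a,b\}$. So at most one of the two entries is nonzero, and the difference has absolute value at most $1$. Hence $\|x^{(u)}-x^{(v)}\|_\infty = 2$, achieved uniquely at $i_{uv}$.

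Next, suppose $S$ induces a subset embedding with distortion $c<2$, and suppose $i_{uv}\notin S$ for some pair. By the previous step, every coordinate in $S$ contributes a difference of at most $1$, so $\|x^{(u)}_S - x^{(v)}_S\|_\infty \le 1$. But the distortion requirement demands $\|x^{(u)}_S-x^{(v)}_S\|_\infty \ge 2/c > 1$, a contradiction. Therefore $S \supseteq \{i_{uv}\}$, and $|S|\ge \binom{n}{2}$. If the ambient dimension is allowed to be larger, we can pad with zero coordinates; these never help, since they contribute difference $0$.

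The only step needing care is the case analysis showing that off-pair coordinates contribute at most $1$: one has to handle both the case where one endpoint is shared with $\{a,b\}$ and the case where it is disjoint. The rest is immediate, so we expect no real obstacle.
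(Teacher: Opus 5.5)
Your proof is correct and takes essentially the paper's approach. The paper's dataset uses entries in $\{0,1,2\}$ with background value $1$, and on the coordinate for a pair $i<j$ it puts $0$ at point $i$ and $2$ at point $j$; this is a translate of your $\{-1,0,1\}$ construction, and the paper concludes exactly as you do that omitting any pair's coordinate leaves that pair at restricted distance at most $1 < 2/c$.
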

\begin{proof}
    Set $d=n(n-1)/2$, and assign for each $j>i$ a unique index $m(i,j)$ from $[d]$.
    The construction of $X=\{x^{(1)},\ldots ,x^{(n)}\}$ is as follows.
    For each $i\in [n]$, vector $x^{(i)} \in \{0,1,2\}^{d}$ is
    \begin{itemize}
        \item 0 at coordinate $m(i,j)$, for all $j > i$;
        \item 2 at coordinate $m(k,i)$, for all $k < i$;
        \item 1 elsewhere.
    \end{itemize}
    So, by construction, for any $i,j\in [n]$ with $i < j$, $\left\|x^{(i)} - x^{(j)}\right\|_\infty = \left|x^{(i)}_{m(i,j)} - x^{(j)}_{m(i,j)}\right| = 2$.

    Consider any set $S\subset [d]$ of size $|S| < d$.
    There must be some $i,j$ such that $m(i,j)\not\in S$, since there are $n(n-1)/2$ pairs $i<j$ and $d=n(n-1)/2$.
    But then, $\left\|x^{(i)}_S - x^{(j)}_S\right\|_\infty \leq 1$, since at all coordinates in $S$, no coordinate has $\left|x^{(i)}_k - x^{(j)}_k\right| = 2$.
\end{proof}

In addition, unlike in standard embeddings, to obtain any finite distortion, one must use at least $\Omega(n)$ coordinates. 
\begin{lemma}
    \label{lem:lb-finite-dist}
    There exists a dataset $X\subseteq \mathbb{R}^{d}$ of $n$ unique points and $d \geq n$ such that for any subset $S\subseteq [d]$ of size at most $n-2$, there are distinct $x,y\in X$ such that $x_S = y_S$.
\end{lemma}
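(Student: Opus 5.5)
We will give an explicit construction based on the standard basis vectors together with the origin. Take $d = n-1$, which is at least $n-1$; if the lemma is read as demanding $d\ge n$, pad with one extra coordinate that is $0$ for every point. Let $X=\{x^{(0)},x^{(1)},\ldots,x^{(n-1)}\}$, where $x^{(0)}=0$ and $x^{(i)}=e_i$ is the $i$-th standard basis vector for $1\le i\le n-1$. These $n$ points are clearly distinct, and every nonzero difference between two of them is supported on at most two of the first $n-1$ coordinates.

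The key step is a pigeonhole argument. Take any $S\subseteq[d]$ with $|S|\le n-2$. Then at least one coordinate $i\in[n-1]$ does not lie in $S$, because $S$ has at most $n-2$ elements and cannot cover all $n-1$ of these coordinates.

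Now compare $x^{(i)}=e_i$ with $x^{(0)}=0$. Their difference is $e_i$, which is supported only on coordinate $i\notin S$. Restricting to $S$ in the sense of Definition~\ref{def:subset-index} therefore gives $x^{(i)}_S = 0 = x^{(0)}_S$. So $x^{(i)}$ and $x^{(0)}$ are distinct points of $X$ that agree on $S$, which is exactly the conclusion of the lemma.

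We do not expect any real obstacle. The only care needed is the dimension bookkeeping: we must check that the padding coordinate, if added to meet $d\ge n$, is constant across all points. Then including it in $S$ never separates any pair, so the pigeonhole argument over the $n-1$ basis coordinates still applies. As a consequence, since $x_S=y_S$ forces $\|x_S-y_S\|_\infty=0<\|x-y\|_\infty/c$ for every $c$, no subset embedding of finite distortion can have dimension $n-2$ or less, which is the claimed $\Omega(n)$ bound.
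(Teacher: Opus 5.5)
Your proof is correct and is essentially the paper's argument. The paper uses the $n$ standard basis vectors $e^1,\ldots,e^n\in\mathbb{R}^n$ and notes that any $S$ with $|S|\le n-2$ misses two coordinates $i,j$, so $e^i_S=e^j_S=\vec{0}$. Your variant replaces one basis vector by the origin, so a single missed coordinate suffices, and you correctly pad with a constant coordinate to meet $d\ge n$.
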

\begin{proof}
    Let $X = \{e^{i}\}_{i\in [n]}$, where $e^{i}\in \mathbb{R}^{n}$ is the vector with a 1 at coordinate $i$ and 0 elsewhere.
    Let $S\subset [n]$ be any set with $|S|\leq n-2$; so, there exist distinct $i,j\in [n]$ such that $i,j\not\in S$.
    Since $i\not\in S$, it follows that $e^{i}_S = \vec{0}$; similarly, $j\not\in S$ and $e^{j}_S = \vec{0}$.
    So, $e^{i}_S = e^{j}_S$ but $e^{i}\neq e^{j}$.
\end{proof}

Nonetheless, by increasing the distortion, we obtain subset embeddings with near linear dimension.

\subsetemb*
In particular, Theorem \ref{thm:subset-embed-imp} implies that there exists an $\ell_\infty$ subset embedding for $X$ with distortion $O(\log n)$ and dimension $O(n)$, and distortion $O(1)$ with dimension $n^{1.01}$.

\subsection{Subset Embedding Construction}
\label{sec:subset-embed-construction}

We now prove Theorem \ref{thm:subset-embed-imp} constructively, by giving an algorithm which produces the desired embeddings.
The construction is greedy: as long as some pair of dataset points does not have its distance preserved to within the target distortion by the coordinates selected so far, we take the furthest such pair and add a coordinate realizing its distance.

Fix a distortion parameter $\alpha \geq 1$.
For a set $S\subseteq [d]$ of already-selected coordinates, we say that a pair $\{u,v\}\subseteq X$ is \textit{unresolved} with respect to $S$ if
\[
    \|u_S - v_S\|_\infty < \frac{1}{\alpha}\|u-v\|_\infty,
\]
using the convention that $\|u_\emptyset - v_\emptyset\|_\infty = 0$.
The routine $\se$, given in Figure \ref{fig:subset-embed-imp}, halts exactly when no unresolved pair remains, and so its output induces a subset embedding with distortion $\alpha$ by definition.
The first coordinate it selects realizes the diameter of $X$, and each later coordinate realizes the largest distance among the pairs which remain unresolved.

\begin{tcolorbox}[breakable, enhanced]
    \textbf{Constructing Subset Embeddings with the Routine \se}
    \medskip

    \textbf{Input:} A set $X\subset \mathbb{R}^{d}$ of $n$ points, distortion parameter $\alpha\geq 1$
    \medskip

    \textbf{Output:} A subset $S\subseteq [d]$ (which induces a subset embedding)
    \medskip

    \textbf{Procedure $\se(X,\alpha)$}:
    \begin{enumerate}
        \item Set $S_0 = \emptyset$.\vspace{4pt}
        \item For $\tau = 1,2,\ldots$:\vspace{4pt}
            \begin{enumerate}
                \item If no pair of $X$ is unresolved with respect to $S_{\tau-1}$, return $S = S_{\tau-1}$.\vspace{4pt}
                \item Let $e_\tau = \{u^{\tau},v^{\tau}\}$ be a pair which is unresolved with respect to $S_{\tau-1}$ and which maximizes $\|u^{\tau}-v^{\tau}\|_\infty$ among all such pairs, breaking ties arbitrarily.
                    Write $D_\tau = \|u^{\tau}-v^{\tau}\|_\infty$.\vspace{4pt}
                \item Let $i_\tau$ be a coordinate with $\left|u^{\tau}_{i_\tau} - v^{\tau}_{i_\tau}\right| = D_\tau$; such a coordinate always exists by definition of $\|\cdot\|_\infty$.\vspace{4pt}
                \item Set $S_\tau = S_{\tau-1}\cup \{i_\tau\}$.\vspace{4pt}
            \end{enumerate}
    \end{enumerate}
\end{tcolorbox}
\begin{figure}[H]
    \vspace{-10pt}
    \caption{The subset embedding construction for Theorem \ref{thm:subset-embed-imp}.}
    \label{fig:subset-embed-imp}
\end{figure}

Throughout the analysis, we write $m$ for the number of iterations performed by $\se(X,\alpha)$, so that it returns $S = S_m$, and we keep the notation $e_\tau, D_\tau, i_\tau, S_\tau$ of Figure \ref{fig:subset-embed-imp}.
Notice that $S_0 \subseteq S_1 \subseteq \cdots \subseteq S_m$, and that each $e_\tau$ is unresolved with respect to $S_{\tau-1}$ but resolved with respect to $S_\tau$, since $i_\tau\in S_\tau$ and thus $\left\|u^{\tau}_{S_\tau} - v^{\tau}_{S_\tau}\right\|_\infty = D_\tau$.

By the stopping rule of $\se$, its output always induces a subset embedding with distortion $\alpha$; it thus remains to bound the number of selected coordinates, which is at most $m$.
To do so, we view the selected pairs $e_1,\ldots ,e_m$ as the edges of a graph over the dataset, and show that the greedy order forces this graph to have no short cycles.
The key fact is Lemma \ref{lem:se-later-pairs}: any pair selected after iteration $s$ is both shorter than $e_s$ and close on the coordinate $i_s$.
So, if the selected pairs contained a short cycle, one could walk from one endpoint of the earliest-selected edge of the cycle to its other endpoint in a few steps, each of which moves only slightly on the coordinate $i_s$; but $i_s$ is precisely the coordinate on which these two endpoints are far apart.
It then suffices to bound the number of edges in a graph with no short cycles.

We first observe some simple properties of the routine.
\begin{lemma}
    \label{lem:se-basic}
    For any set $X\subset \mathbb{R}^{d}$ of $n$ points and any $\alpha\geq 1$, the pairs $e_1,\ldots ,e_m$ selected by $\se(X,\alpha)$ are distinct, and each consists of two distinct points of $X$.
    Consequently $\se(X,\alpha)$ halts after $m \leq \binom{n}{2}$ iterations and returns a set $S$ with $|S|\leq m$ which induces an $\ell_\infty$ subset embedding for $X$ with distortion $\alpha$.
\end{lemma}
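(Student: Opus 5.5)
The plan is to verify each claim in turn directly from the definition of the routine $\se$ and the notion of an unresolved pair, mirroring the counting argument of Lemma \ref{lem:isometric-subset}.

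\emph{Selected pairs consist of distinct points.} Each $e_\tau$ is chosen among pairs unresolved with respect to $S_{\tau-1}$. If $u^\tau = v^\tau$, then both sides of the defining inequality vanish, and the strict inequality $0 < \frac{1}{\alpha}\cdot 0$ fails. So an unresolved pair always has $\|u-v\|_\infty>0$, and in particular consists of two distinct points.

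\emph{Selected pairs are distinct.} Suppose $e_s = e_\tau$ with $s<\tau$. At step $s$ we add $i_s$ to $S_s$, so $\|u^s_{S_s}-v^s_{S_s}\|_\infty \ge |u^s_{i_s}-v^s_{i_s}| = D_s$. Since $S_s\subseteq S_{\tau-1}$ and restricting to a larger coordinate set can only increase the $\ell_\infty$ norm, we get $\|u^s_{S_{\tau-1}}-v^s_{S_{\tau-1}}\|_\infty \ge D_s \ge \frac{1}{\alpha}D_s$, using $\alpha\ge 1$. Hence $e_s$ is resolved with respect to $S_{\tau-1}$ and cannot be selected at step $\tau$, a contradiction. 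The key point here is monotonicity: once a pair is resolved, it stays resolved.

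\emph{Halting and the size bound.} The selected pairs are distinct unordered pairs of distinct points of $X$, so there are at most $\binom{n}{2}$ of them. Every non-terminating iteration selects one such pair, so the routine halts after $m\le\binom n2$ iterations. Each iteration adds at most one coordinate, giving $|S|=|S_m|\le m$.

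\emph{Distortion.} The routine returns $S$ only when no pair is unresolved. This means every pair $x,y\in X$ satisfies $\|x_S-y_S\|_\infty\ge\frac1\alpha\|x-y\|_\infty$, which is exactly the definition of an $\ell_\infty$ subset embedding with distortion $\alpha$.

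No step here is a genuine obstacle. The only point needing care is the monotonicity of resolution, which relies on $\alpha\ge1$ and on the nesting $S_0\subseteq S_1\subseteq\cdots\subseteq S_m$.
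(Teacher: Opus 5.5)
Your proof is correct and follows essentially the same route as the paper's. An unresolved pair has positive distance; once $i_s$ is added, $e_s$ is resolved and stays resolved by the nesting $S_s\subseteq S_{\tau-1}$, so it cannot be reselected; and counting the $\binom{n}{2}$ pairs together with the stopping rule gives halting, $|S|\le m\le\binom{n}{2}$, and distortion $\alpha$.
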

\begin{proof}
    If $\{u,v\}$ is unresolved with respect to some set, then $\|u-v\|_\infty > \alpha\|u_S-v_S\|_\infty \geq 0$ and so $u\neq v$; in particular, each $e_\tau$ consists of two distinct points and $D_\tau > 0$.
    For distinctness, fix $\tau$ and any later iteration $t > \tau$.
    As noted above, $e_\tau$ is resolved with respect to $S_\tau$, and $S_\tau \subseteq S_{t-1}$, and so $e_\tau$ is resolved with respect to $S_{t-1}$ as well; but $e_{t}$ is unresolved with respect to $S_{t-1}$, and thus $e_t \neq e_\tau$.

    Since $X$ has only $\binom{n}{2}$ pairs of points, it follows that $m\leq \binom{n}{2}$ and the routine halts.
    It returns $S = S_m = \{i_1,\ldots ,i_m\}$, so $|S|\leq m$, and it does so only when no pair of $X$ is unresolved with respect to $S$, i.e.~when $\|u_S - v_S\|_\infty \geq \frac{1}{\alpha}\|u-v\|_\infty$ for all $u,v\in X$.
    This is exactly the statement that $S$ induces an $\ell_\infty$ subset embedding for $X$ with distortion $\alpha$.
\end{proof}

\begin{definition}[Witness Graph]
    \label{def:witness-graph}
    The \textit{witness graph} of an execution of $\se(X,\alpha)$ in \Cref{fig:subset-embed-imp} is the graph $G$ over vertex set $X$ with edge set $\{e_1,\ldots ,e_m\}$.
\end{definition}
By Lemma \ref{lem:se-basic}, the selected pairs are distinct pairs of distinct points, and so $G$ is a simple graph with exactly $m$ edges.
Moreover, each edge of $G$ is selected at exactly one iteration, and so we may speak of one edge of $G$ being selected before another.

The following lemma is where the greedy order is used: it shows that the coordinate $i_s$ selected at iteration $s$ is a coordinate on which every later selected pair has a small difference.
\begin{lemma}
    \label{lem:se-later-pairs}
    For any two iterations $s < t$ of $\se(X,\alpha)$,
    \[
        \left|u^{t}_{i_s} - v^{t}_{i_s}\right| < \frac{D_s}{\alpha}.
    \]
\end{lemma}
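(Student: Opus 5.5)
Fix iterations $s<t$. The plan is to combine two facts about the pair $e_t=\{u^t,v^t\}$: it was unresolved at time $t-1$, and it is no longer than $e_s$ because of the greedy maximal choice. Together with $i_s\in S_{t-1}$, these give the bound.

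First, since $S_s\subseteq S_{t-1}$ and $i_s\in S_s$, the coordinate $i_s$ lies in $S_{t-1}$. Because $e_t$ is unresolved with respect to $S_{t-1}$, we have
\[
\left|u^{t}_{i_s}-v^{t}_{i_s}\right| \le \left\|u^t_{S_{t-1}}-v^t_{S_{t-1}}\right\|_\infty < \frac{D_t}{\alpha}.
\]
It then remains to show $D_t\le D_s$.

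Second, we show that $e_t$ was already unresolved with respect to $S_{s-1}$. Resolution is monotone in the selected set: if $S\subseteq S'$, then $\|u_S-v_S\|_\infty\le\|u_{S'}-v_{S'}\|_\infty$, so a pair unresolved with respect to $S'$ is unresolved with respect to $S$. Since $S_{s-1}\subseteq S_{t-1}$ and $e_t$ is unresolved with respect to $S_{t-1}$, it is unresolved with respect to $S_{s-1}$. At iteration $s$, the routine picks $e_s$ to maximize $\|u-v\|_\infty$ over all pairs unresolved with respect to $S_{s-1}$. Hence $D_t=\|u^t-v^t\|_\infty\le D_s$.

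Combining the two steps gives $\left|u^{t}_{i_s}-v^{t}_{i_s}\right|<D_t/\alpha\le D_s/\alpha$, which is the claim. The only point needing care is the monotonicity step, which guarantees that $e_t$ was a candidate at iteration $s$. Tie-breaking does not matter, since only the inequality $D_t\le D_s$ is used.
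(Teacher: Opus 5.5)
Your proposal is correct and follows essentially the same argument as the paper: you use monotonicity of restriction to see that $e_t$ was already a candidate at iteration $s$, so greedy maximality gives $D_t \le D_s$, and then $i_s \in S_{t-1}$ bounds the coordinate difference by $D_t/\alpha$. The only cosmetic difference is that the paper bounds $|u^t_{i_s}-v^t_{i_s}|$ through $\|u^t_{S_s}-v^t_{S_s}\|_\infty$ rather than directly through $S_{t-1}$; this changes nothing.
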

\begin{proof}
    The pair $e_t$ is unresolved with respect to $S_{t-1}$, and $S_{s-1}\subseteq S_{s}\subseteq S_{t-1}$ because $s\leq t-1$.
    Since restricting to a smaller set of coordinates can only decrease an $\ell_\infty$ distance,
    \begin{equation}
        \label{eq:se-later-unresolved}
        \left\|u^{t}_{S_{s-1}} - v^{t}_{S_{s-1}}\right\|_\infty \leq \left\|u^{t}_{S_{s}} - v^{t}_{S_{s}}\right\|_\infty \leq \left\|u^{t}_{S_{t-1}} - v^{t}_{S_{t-1}}\right\|_\infty < \frac{D_t}{\alpha};
    \end{equation}
    that is, $e_t$ is unresolved with respect to both $S_{s-1}$ and $S_s$.

    Since $e_t$ is unresolved with respect to $S_{s-1}$, it was one of the candidates available at iteration $s$, and iteration $s$ selects a candidate of maximum $\ell_\infty$ distance; thus, $D_t \leq D_s$.
    As $i_s \in S_s$, the middle term of \eqref{eq:se-later-unresolved} bounds the difference at coordinate $i_s$, and so
    \[
        \left|u^{t}_{i_s} - v^{t}_{i_s}\right| \leq \left\|u^{t}_{S_s} - v^{t}_{S_s}\right\|_\infty < \frac{D_t}{\alpha} \leq \frac{D_s}{\alpha}. \qedhere
    \]
\end{proof}

We now show that the witness graph has large girth.
\begin{lemma}
    \label{lem:witness-girth}
    The witness graph of any execution of $\se(X,\alpha)$ has girth greater than $\alpha+1$.
\end{lemma}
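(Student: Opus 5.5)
Suppose for contradiction that the witness graph $G$ has a cycle $C$ of length $k \leq \alpha+1$. Since $G$ is simple (Lemma \ref{lem:se-basic}), $k\geq 3$, though we only need $k\le \alpha+1$. Among the $k$ edges of $C$, let $e_s=\{u^s,v^s\}$ be the one selected earliest, i.e.\ with the smallest iteration index $s$. Removing $e_s$ from $C$ leaves a path $u^s = w_0, w_1, \ldots, w_{k-1} = v^s$ consisting of $k-1$ edges of $G$. Each of these edges is some $e_t$ with $t > s$, because $s$ is the minimal index on the cycle and the edges are distinct.

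The key step is to apply Lemma \ref{lem:se-later-pairs} to each edge of this path at the single coordinate $i_s$. For each $j$, the edge $\{w_{j-1},w_j\}=e_t$ has $t>s$, so $|(w_{j-1})_{i_s}-(w_j)_{i_s}| < D_s/\alpha$. Summing with the triangle inequality along the path gives
\[
D_s = \left|u^s_{i_s}-v^s_{i_s}\right| \leq \sum_{j=1}^{k-1}\left|(w_{j-1})_{i_s}-(w_j)_{i_s}\right| < (k-1)\frac{D_s}{\alpha} \leq D_s,
\]
where the first equality is the defining property of $i_s$ from step (c) of $\se$, and the last inequality uses $k-1\leq \alpha$. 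Since $D_s>0$ by Lemma \ref{lem:se-basic}, this is a contradiction, so every cycle in $G$ has length greater than $\alpha+1$.

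I do not expect a serious obstacle; the proof is essentially the intuition stated before Lemma \ref{lem:se-later-pairs}. The only points needing care are these:
\begin{itemize}
\item choosing the \emph{earliest} edge of the cycle, so that every other edge is strictly later and Lemma \ref{lem:se-later-pairs} applies to all of them;
\item keeping the inequality strict, which it is because each term is strictly less than $D_s/\alpha$ and there is at least one term;
\item handling non-integer $\alpha$, which is harmless since we only use $k-1\le\alpha$.
\end{itemize}
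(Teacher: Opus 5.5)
Your proposal is correct and follows essentially the same argument as the paper. Both take the earliest-selected edge of a short cycle, apply Lemma \ref{lem:se-later-pairs} at coordinate $i_s$ to the remaining $k-1$ later edges, and sum via the triangle inequality to reach the contradiction $D_s < (k-1)D_s/\alpha \leq D_s$.
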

\begin{proof}
    Suppose, for contradiction, that the witness graph $G$ contains a cycle $w_1w_2\cdots w_g w_1$ of length $g \leq \alpha+1$.
    Since $G$ is simple, $g\geq 3$, and the $g$ edges $\{w_j,w_{j+1}\}$ of the cycle are distinct (with indices modulo $g$, so that $w_{g+1}=w_1$); thus, they are selected at $g$ distinct iterations.
    Without loss of generality, by rotating the labels, suppose $\{w_1,w_2\}$ is the earliest selected of these edges, say $\{w_1,w_2\} = e_s$, and write $i = i_s$ and $D = D_s = \|w_1 - w_2\|_\infty$.

    Each of the other $g-1$ edges of the cycle is selected at some iteration $t > s$, and so by Lemma \ref{lem:se-later-pairs}, $|w_{j,i} - w_{j+1,i}| < D/\alpha$ for all $j\in \{2,\ldots ,g\}$.
    On the other hand, the coordinate $i$ was selected to realize the distance of $e_s$, and so $|w_{1,i}-w_{2,i}| = D$.
    So, summing these $g-1$ differences along the rest of the cycle from $w_2$ back to $w_1$ gives the contradiction:
    \[
        D = \left|w_{1,i} - w_{2,i}\right| \leq \sum_{j=2}^{g}\left|w_{j,i}-w_{j+1,i}\right| < (g-1)\cdot \frac{D}{\alpha} \leq \frac{\alpha}{\alpha}D = D,
    \]
    with the first inequality following triangular inequality, and the final inequality following from $g\leq \alpha+1$.
\end{proof}

It remains to convert the girth bound into a bound on the number of edges of the witness graph, which is exactly the number of iterations $m$.
\begin{lemma}[{\protect\cite{alon2002moore}}]
    \label{lem:moore-bound}
    For any integer $k\geq 1$, every graph $H$ on $n$ vertices with girth at least $2k+1$ has at most $\frac{1}{2}\left(n^{1+1/k}+n\right)$ edges.
\end{lemma}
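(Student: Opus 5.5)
Since this is the Alon--Hoory--Linial irregular Moore bound, I would reprove it by counting non-backtracking walks. The target inequality is $\bar d(\bar d-1)^{k-1}\le n$, where $\bar d = 2|E(H)|/n$ is the average degree. From it we get $(\bar d-1)^k < n$, hence $\bar d < n^{1/k}+1$ and $|E(H)| = n\bar d/2 < \frac12(n^{1+1/k}+n)$, which is the claimed bound.

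\textbf{Step 1: reduce to minimum degree at least $2$.} I would induct on $n$. Write $f(n)=\frac12(n^{1+1/k}+n)$. If $H$ has a vertex of degree at most $1$, delete it. This removes at most one edge and keeps girth at least $2k+1$. Since $f(n)-f(n-1)\ge 1$ (the function $n^{1+1/k}$ is convex with increments at least $1$), the induction hypothesis for $n-1$ vertices gives $|E(H)|\le f(n-1)+1\le f(n)$. The base cases are trivial. So I may assume every vertex has degree at least $2$.

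\textbf{Step 2: an upper bound on walks.} Let $N_k$ be the number of non-backtracking walks of length $k$ in $H$, i.e.\ sequences $v_0,\ldots,v_k$ with $v_{j}v_{j+1}\in E(H)$ and $v_{j+1}\neq v_{j-1}$. Because the girth is at least $2k+1$, two distinct non-backtracking walks of length $k$ from the same $v_0$ must end at distinct vertices. Otherwise their union would contain a cycle of length at most $2k$. Hence each start vertex contributes at most $n$ walks, and $N_k\le n^2$.

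\textbf{Step 3 (main obstacle): a lower bound on walks.} The key claim is
\[
N_k \ \ge\ n\,\bar d\,(\bar d-1)^{k-1}
\]
whenever the minimum degree is at least $2$. Combined with Step 2 this gives $\bar d(\bar d-1)^{k-1}\le n$, as required.

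My first attempt is the Alon--Hoory--Linial argument. Consider the non-backtracking random walk on directed edges: start from a uniformly random directed edge, and at each step move to a uniformly random non-returning neighbor. The uniform distribution on directed edges is stationary for this chain. The number of walks is
\[
N_k = 2|E(H)|\cdot \mathbb{E}\Bigl[\prod_{j=1}^{k-1}(\deg(v_j)-1)\Bigr].
\]
By AM--GM/Jensen, this expectation is at least $\exp\bigl(\mathbb{E}\sum_j \log(\deg(v_j)-1)\bigr)$. By stationarity each $v_j$ is distributed proportionally to degree, so each term equals $\mathbb{E}_{v\sim \deg}\log(\deg(v)-1)$.

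The delicate point is showing $\sum_v \frac{\deg(v)}{2|E(H)|}\log(\deg(v)-1)\ge \log(\bar d-1)$. This holds because $x\mapsto x\log(x-1)$ is convex for $x\ge 2$. This convexity is exactly where minimum degree $2$ is needed.

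If this route becomes messy, I would fall back on the standard Hölder-type inequality for non-backtracking walk counts, as in the Alon--Hoory--Linial paper. The resulting constant is identical.
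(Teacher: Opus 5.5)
Your proof is correct, but it takes a different route from the paper.

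\textbf{The paper's route.} The paper does not reprove anything. It handles $\bar d<2$ directly. For $\bar d\ge 2$ it quotes the two Alon--Hoory--Linial inequalities: $n\ge 1+\bar d\sum_{t=0}^{r-1}(\bar d-1)^t$ for girth $2r+1$, and $n\ge 2\sum_{t=0}^{r-1}(\bar d-1)^t$ for girth $2r$. It then keeps a single term of whichever one applies to get $n>(\bar d-1)^k$.

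\textbf{Your route.} You reprove from scratch only the weaker inequality $\bar d(\bar d-1)^{k-1}\le n$, which is all that is needed.
\begin{itemize}
\item Your upper bound $N_k\le n^2$ uses girth $\ge 2k+1$ directly, so you never split on the parity of the girth.
\item Your lower bound is the Alon--Hoory--Linial non-backtracking-walk/Jensen argument, and your checks go through. The uniform distribution on directed edges is stationary. The function $x\log(x-1)$ has second derivative $(x-2)/(x-1)^2\ge 0$ on $[2,\infty)$.
\item The cost is the extra reduction to minimum degree $2$. The paper avoids it because the cited theorem assumes only average degree $\ge 2$. Your induction via $f(n)-f(n-1)\ge 1$ handles this correctly.
\end{itemize}

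\textbf{Trade-off.} The paper's version is shorter but relies on the exact form of the cited bounds. Yours is self-contained.

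\textbf{One point to tighten in Step 2.} Justify ``their union contains a cycle of length at most $2k$'' as follows. The union of the two walks has at most $2k$ edges, so any cycle in it has length at most $2k$. If that union were a forest, every non-backtracking walk in it would be a path, and paths in a forest are unique between two vertices. So two distinct walks with the same endpoints cannot live in a forest.
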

\begin{proof}
    Let $m$ be the number of edges of $H$ and $\bar{d}=2m/n$ its average degree.
    If $\bar{d} < 2$, then $m < n \leq \frac{1}{2}(n^{1+1/k}+n)$ and we are done, so suppose $\bar{d}\geq 2$; notice that this forces $m\geq n$, and thus $H$ contains a cycle and has finite girth.
    \cite{alon2002moore} show that a graph on $n$ vertices with average degree $\bar{d}\geq 2$ and girth $2r+1$ satisfies $n\geq 1 + \bar{d}\sum_{t=0}^{r-1}(\bar{d}-1)^{t}$, and that a graph with girth $2r$ satisfies $n \geq 2\sum_{t=0}^{r-1}(\bar{d}-1)^{t}$.
    If $H$ has girth at least $2k+1$, then $r\geq k$ in the odd case and $r\geq k+1$ in the even case.
    So, keeping only a single term of each sum and using $\bar{d} > \bar{d}-1\geq 1$, these bounds read, respectively,
    \[
        n \geq 1 + \bar{d}\left(\bar{d}-1\right)^{k-1} > \left(\bar{d}-1\right)^{k}
        \qquad\text{and}\qquad
        n \geq 2\left(\bar{d}-1\right)^{k} > \left(\bar{d}-1\right)^{k}.
    \]
    In either case, $n > (\bar{d}-1)^{k}$, and so $\bar{d} < n^{1/k}+1$, giving $m = \frac{n\bar{d}}{2} < \frac{1}{2}\left(n^{1+1/k}+n\right)$.
\end{proof}

\begin{proof}[Proof of Theorem \ref{thm:subset-embed-imp}]
    Let $k = \lceil c\rceil$, run $\se(X,\alpha)$ with $\alpha = 2k-1$, and let $S$ be its output.
    By Lemma \ref{lem:se-basic}, $S$ induces an $\ell_\infty$ subset embedding for $X$ with distortion $\alpha = 2\lceil c\rceil - 1 \leq 2c+1$, and $|S|\leq m$, where $m$ is the number of iterations.

    The witness graph $G$ of this execution is a simple graph over the $n$ points of $X$ with exactly $m$ edges.
    By Lemma \ref{lem:witness-girth}, it has girth greater than $\alpha+1 = 2k$; since the girth is an integer, it is thus at least $2k+1$.
    So, by Lemma \ref{lem:moore-bound} and using $k = \lceil c\rceil \geq c$,
    \[
        |S| \leq m \leq \frac{1}{2}\left(n^{1+1/k}+n\right) \leq \frac{1}{2}\left(n^{1+1/c}+n^{1+1/c}\right) = n^{1+1/c}. \qedhere
    \]
\end{proof}

We conclude by arguing that the routine $\se$ is efficient.
Computing $\|x-y\|_\infty$ and a coordinate realizing it for all $\binom{n}{2}$ pairs takes $O(n^2 d)$ time, after which the pairs can be sorted by distance and processed in decreasing order, maintaining $\|u_S - v_S\|_\infty$ for each surviving pair in $O(1)$ time per selected coordinate.
The total running time is thus $O(n^2 d + n^2|S|) = O(n^2 d + n^{3+1/c})$, which is $O(n^2 d)$ in the ultra-high dimensional regime $d = \Omega(n^{1+1/c})$.

\subsection{Lower Bound}
\label{sec:subset-embed-lower}
We prove a lower bound on the size of subset embeddings with distortion at most $c$.
Like the upper bound construction, the lower bound relies on a bound on the number of edges in a graph of girth $\lambda$.
\begin{lemma}[{\protect\cite{erdos1963regulare}}]
    \label{lem:girth-lower}
    For any integer $\lambda\geq 3$ and $n$ sufficiently large, there exists a graph over $n$ nodes with girth at least $\lambda$ and $\Omega\left(n^{1+1/(\lambda-2)}\right)$ edges.
\end{lemma}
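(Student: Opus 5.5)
The plan is the classical Erdős deletion argument: sample a random graph, show that in expectation it has many more edges than short cycles, then destroy every short cycle by deleting one edge from each. The case $\lambda = 3$ is immediate, since every simple graph has girth at least $3$ and the complete graph $K_n$ has $\binom{n}{2} = \Omega(n^2) = \Omega(n^{1+1/(\lambda-2)})$ edges. So I assume $\lambda \geq 4$.

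Let $G \sim G(n,p)$ with $p = \varepsilon\, n^{1/(\lambda-2)-1}$, so that $np = \varepsilon n^{1/(\lambda-2)}$. The constant $\varepsilon \in (0,1)$ depends only on $\lambda$ and is fixed below. Let $E$ be the number of edges of $G$ and $Y$ the number of cycles of length $\ell \in \{3,\ldots,\lambda-1\}$.

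First I compute the expected number of edges:
\[
\mathbb{E}[E] = \binom{n}{2}p \geq \tfrac{1}{4}\, n \cdot np = \tfrac{\varepsilon}{4}\, n^{(\lambda-1)/(\lambda-2)}
\]
for $n \geq 2$. Next I bound the cycles. There are at most $n^\ell/(2\ell) \leq n^\ell$ possible cycles of length $\ell$, and each is present with probability $p^\ell$, so
\[
\mathbb{E}[Y] \leq \sum_{\ell=3}^{\lambda-1}(np)^\ell \leq 2(np)^{\lambda-1} = 2\varepsilon^{\lambda-1}\, n^{(\lambda-1)/(\lambda-2)}.
\]
The geometric sum is at most twice its top term once $np \geq 2$, which holds for $n$ sufficiently large. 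Both quantities therefore scale as $n^{1+1/(\lambda-2)}$, and only the constants compare. Choosing $\varepsilon$ with $2\varepsilon^{\lambda-2} \leq 1/8$ gives $\mathbb{E}[Y] \leq \mathbb{E}[E]/2$; for instance $\varepsilon = 1/2$ works for every $\lambda \geq 4$, and $\lambda = 3$ was handled separately.

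By linearity of expectation, $\mathbb{E}[E - Y] \geq \mathbb{E}[E]/2$, so some outcome $G$ satisfies
\[
E - Y \geq \tfrac{\varepsilon}{8}\, n^{1+1/(\lambda-2)}.
\]
From this $G$ I delete one edge from each cycle of length less than $\lambda$. This removes at most $Y$ edges and leaves a graph on the same $n$ vertices with no cycle of length below $\lambda$, i.e.\ with girth at least $\lambda$. It has $\Omega(n^{1+1/(\lambda-2)})$ edges, as claimed.

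The only delicate step is the balancing of exponents. The exponent $1/(\lambda-2)$ is exactly what makes $(np)^{\lambda-1}$ and $n \cdot np$ the same order in $n$. Everything then hinges on the constant $\varepsilon$ being small enough and on $np \geq 2$ so that the longest cycle length dominates the sum; this is where the hypothesis "$n$ sufficiently large" is used. No concentration bound is needed, since the argument uses only expectations.
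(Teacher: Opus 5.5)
Your deletion argument is the standard probabilistic proof of this bound, and its structure is sound. The paper gives no proof of its own; it only cites Erdős. The classical source for exactly this exponent is the Erdős--Sachs construction of $k$-regular graphs of girth $\lambda$ on $O((k-1)^{\lambda-2})$ vertices, which is an extremal, non-probabilistic argument. Your route is shorter and self-contained. It also directly yields a graph on exactly $n$ vertices for every large $n$, whereas a regular-graph construction must be padded with isolated vertices. What it gives up is regularity, but the paper never uses that: Lemma~\ref{lem:lower-bound-construction} needs only the girth and the edge count.

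There is one slip to fix. Your example $\varepsilon = 1/2$ does not satisfy your own condition $2\varepsilon^{\lambda-2}\le 1/8$ when $\lambda\in\{4,5\}$, since the left side is $1/2$ and $1/4$ respectively. So for those $\lambda$ the inequality $\mathbb{E}[Y]\le \mathbb{E}[E]/2$ is not justified as written. Taking $\varepsilon = 1/4$ repairs it:
\begin{itemize}
\item $2\varepsilon^{\lambda-2}\le 2\cdot 4^{-2} = 1/8$ for every $\lambda\ge 4$;
\item the requirement $np\ge 2$ becomes $n\ge 8^{\lambda-2}$;
\item you obtain at least $\frac{1}{32}n^{1+1/(\lambda-2)}$ edges, with a constant independent of $\lambda$.
\end{itemize}
That $\lambda$-independent constant is convenient when the lemma is applied with $\lambda=\lceil c\rceil+2$.
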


The construction is then as follows.
Let $G=(V,E)$ be a graph over $n$ nodes with girth $\lambda\geq c+2$.
For each $e\in E$, fix an arbitrary endpoint $u_e$ of $e$ and define the potential function over nodes $f_e(z) = d_{G-e}(z,u_e)$,\footnote{With the convention that if disconnected, $d_{G-e}(z,u_e) = n$.} where $G-e$ is the graph with edge $e$ removed.
Then, with $E=\{e_1,\ldots ,e_m\}$ arbitrarily ordered, for each node $v\in V$, define the vector (in $\mathbb{R}^{m}$) $x^{v} = (f_{e_1}(v), f_{e_2}(v), \ldots , f_{e_m}(v))$.
Let $X = \{x^{v} \mid v\in V\}$.

We first make some simple observations about $f$.
\begin{lemma}
    \label{lem:potential-facts}
    Fix any edge $e=(u,v)$ in $E$.
    Then:
    \begin{enumerate}
        \item $|f_e(u) - f_e(v)| \geq \lambda - 1$
        \item For all edges $h\neq e$, $|f_h(u) - f_h(v)| \leq 1$.
    \end{enumerate}
\end{lemma}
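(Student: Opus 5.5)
The plan is to prove both parts directly from the definition $f_h(z) = d_{G-h}(z,u_h)$. We use two facts: graph distances satisfy the triangle inequality, and $G$ has girth at least $\lambda$. Throughout, the convention $d_{G-h}(z,u_h)=n$ for disconnected pairs needs care, so we track connectivity separately.

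\textbf{Part 1.} Let $e=(u,v)$ with $u_e\in\{u,v\}$; by symmetry say $u_e=u$, so $f_e(u)=0$.

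If $v$ is disconnected from $u$ in $G-e$, then $f_e(v)=n$. Since $G$ is simple with finite girth $\lambda\le n$, we have $n\ge\lambda>\lambda-1$.

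Otherwise, take a shortest $u$--$v$ path in $G-e$, of length $\ell$. Adding $e$ closes it into a cycle of length $\ell+1$ in $G$. The path is simple and avoids $e$, and $\ell\ge 2$ because $G$ is simple, so this is a genuine cycle. The girth bound gives $\ell+1\ge\lambda$, hence $|f_e(u)-f_e(v)|=\ell\ge\lambda-1$.

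\textbf{Part 2.} Fix an edge $h\neq e$. Since $e$ remains an edge of $G-h$, $u$ and $v$ are adjacent in $G-h$. This means $u$ and $v$ lie in the same connected component of $G-h$.

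If that component does not contain $u_h$, then $f_h(u)=f_h(v)=n$ and the difference is $0$.

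If it does contain $u_h$, the triangle inequality in $G-h$ gives $|d_{G-h}(u,u_h)-d_{G-h}(v,u_h)|\le d_{G-h}(u,v)=1$.

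\textbf{Main obstacle.} The argument is essentially routine. The only delicate points are the disconnected convention and checking that the path plus $e$ really forms a cycle. For the latter, the key step is noting that the shortest path in $G-e$ cannot use $e$ and has length at least $2$. One also needs $\lambda\le n$ in the disconnected case of Part 1. This holds whenever $G$ has a cycle; if $G$ is a forest, the girth is infinite, and then we just use $n\ge\lambda-1$ under the construction's assumptions, or note that $|f_e(u)-f_e(v)|=n$ is as large as needed.
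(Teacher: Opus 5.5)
Your proof is correct and follows the paper's argument. In Part 1 you close a shortest $u$--$v$ path in $G-e$ with $e$ into a cycle and apply the girth bound, and in Part 2 you use that $e$ survives in $G-h$ together with the triangle inequality. Your explicit handling of the disconnected-distance convention, in both parts, is a detail the paper's proof leaves implicit. The condition $\lambda\le n$ is automatic, because a graph of finite girth $\lambda$ contains a cycle on $\lambda$ distinct vertices, so your hedging about forests is unnecessary.
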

\begin{proof}
    Without loss generality, suppose $u_e = u$.
    Then, $f_e(u) = 0$ and $|f_e(u) - f_e(v)| = f_e(v)$.
    Since $G$ has girth $\lambda$, every cycle has length at least $\lambda$; thus, in $G-e$, $u$ and $v$ must have distance at least $\lambda-1$, as otherwise $e$ would lie in a cycle of length at most $\lambda-1$ in $G$.
    So, $f_e(v) \geq \lambda-1$, showing the first claim.

    For the second claim, fix an edge $h\neq e$.
    Since $h \neq e$, the edge $e$ is present in $G-h$, and so $u$ and $v$ are adjacent in $G-h$.
    Thus, by the triangle inequality\footnote{Again using the convention that disconnected nodes have distance $n$},
    \[
        |d_{G-h}(u) - d_{G-h}(v)| \leq d_{G-h}(u,v) \leq 1
    \]
    as desired.
\end{proof}

With this, we now prove that to preserve pairwise distances in $X$ up to a factor of $c$, a subset embedding must include all $m$ coordinates.
\begin{lemma}
    \label{lem:lower-bound-construction}
    Let $S\subset [m]$ have at most $m-1$ elements.
    Then, there exists some $x^{u}, x^{v}\in X$ for which $\|x^{v}_S - x^{u}_S\|_\infty < (1/c)\|x^{v} - x^{u}\|_\infty$.
\end{lemma}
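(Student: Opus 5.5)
Since $|S|\leq m-1$, some coordinate $j\in[m]\setminus S$ is omitted; let $e=e_j=(u,v)$ be the corresponding edge. I will show that the pair $x^{u},x^{v}$ witnesses the failure of distortion $c$. The plan is simply to compare the full distance with the restricted distance using the two parts of Lemma \ref{lem:potential-facts}.

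For the full distance, coordinate $j$ of $x^{u}-x^{v}$ is $f_{e}(u)-f_{e}(v)$. By part 1 of Lemma \ref{lem:potential-facts}, this has absolute value at least $\lambda-1$. Hence $\|x^{u}-x^{v}\|_\infty\geq \lambda-1\geq c+1$, using $\lambda\geq c+2$.

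For the restricted distance, every coordinate $k\in S$ has $k\neq j$, so it corresponds to an edge $h=e_k\neq e$. By part 2 of Lemma \ref{lem:potential-facts}, $|f_h(u)-f_h(v)|\leq 1$. Therefore $\|x^{u}_S-x^{v}_S\|_\infty\leq 1$; if $S=\emptyset$, the restricted distance is $0$ by convention.

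Combining the two bounds gives $\|x^{u}_S-x^{v}_S\|_\infty\leq 1<(c+1)/c\leq (1/c)\|x^{u}-x^{v}\|_\infty$, which is the claimed strict inequality. The only point needing care is strictness, and it holds with room to spare because $\lambda-1\geq c+1>c$. The disconnected-distance convention ($d_{G-h}=n$) does not affect part 2. Since $u,v$ are adjacent in $G-h$, they lie in the same component: either both have finite distances to $u_h$, or both equal $n$.
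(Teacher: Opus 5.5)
Your proof is correct and is essentially the paper's argument. You pick an omitted coordinate, use part 1 of Lemma \ref{lem:potential-facts} to get full distance at least $\lambda-1\geq c+1$, and use part 2 to get restricted distance at most $1$. Your added remarks on $S=\emptyset$ and on the disconnected-distance convention are valid refinements of that same proof.
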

\begin{proof}
    Since $|S| \leq m-1$, there is some index $k\in [m]\setminus S$; let $e = e_k = (u,v)$ be the corresponding edge of $G$, and consider the two points $x^{u},x^{v}\in X$.

    By the first part of Lemma \ref{lem:potential-facts}, the $k$-th coordinate alone gives
    \[
        \|x^{v} - x^{u}\|_\infty \geq |f_{e}(u) - f_{e}(v)| \geq \lambda - 1 \geq c+1,
    \]
    using that the graph $G$ was chosen with girth $\lambda \geq c+2$.
    On the other hand, every $h\in S$ indexes an edge $e_h \neq e$, as $k\not\in S$, and so by the second claim of Lemma \ref{lem:potential-facts},
    \[
        \|x^{v}_S - x^{u}_S\|_\infty = \max_{h\in S}\left| f_{e_h}(u) - f_{e_h}(v) \right| \leq 1.
    \]
    Thus,
    \[
        \|x^{v}_S - x^{u}_S\|_\infty \leq 1 < \frac{c+1}{c} \leq \frac{1}{c}\|x^{v}-x^{u}\|_\infty
    \]
    as desired.
\end{proof}

Theorem \ref{thm:subset-embed-lower-bound} then follows from Lemma \ref{lem:lower-bound-construction} and from Lemma \ref{lem:girth-lower}, applied with $\lambda = \lceil c \rceil + 2$.

\section{ANN Data-Structure Constructions}\label{sec:algorithms}
In this section, we present our main data structure constructions for approximate nearest neighbor search in $\ell_\infty$.
Our approach leverages the subset embeddings developed in the previous section to reduce the dimensionality of the problem while preserving distance.
We begin by presenting a simple 3-approximation algorithm that serves as a building block for our more advanced constructions.

\begin{lemma}
    \label{lem:3-approx}
    Given a dataset $X\subset \mathbb{R}^{d}$ of $n$ points, there exists a data structure with storage space $O(n^2\log d)$ which, upon receiving a query which is a tuple $(q,\calS)$ with $q\in \mathbb{R}^{d}$, $\calS \subseteq X$, outputs a 3-approximate nearest neighbor (over $\ell_\infty$) to $q$ in $\calS$.
    Moreover, the query time is $O(|\calS|^2)$.
\end{lemma}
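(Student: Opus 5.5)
For each ordered pair $(x,y)$ with $x\neq y$ in $X$, we will precompute a coordinate $i(x,y)\in[d]$ satisfying $|x_{i(x,y)}-y_{i(x,y)}|=\|x-y\|_\infty$. We store the index $i(x,y)$ together with the value $x_{i(x,y)}$. Each entry takes $O(\log d)$ bits for the index plus $O(1)$ words for the value, so the total space is $O(n^2\log d)$. On a query $(q,\calS)$, each $x\in\calS$ gets the restricted deciding distance
\[
\ddec^{\calS}(q,x)=\max_{y\in \calS\setminus\{x\}} |q_{i(x,y)}-x_{i(x,y)}|.
\]
The output is $\hat p=\arg\min_{x\in\calS}\ddec^{\calS}(q,x)$. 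Each term reads one stored value and one query coordinate, so computing all of them costs $O(|\calS|^2)$ operations. Under the paper's access convention this is $O(|\calS|^2)$, up to the stated query-coordinate access cost.

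For correctness, let $p^*$ be the nearest neighbor of $q$ in $\calS$ and set $r=\|q-p^*\|_\infty$. Two facts are needed.

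First, $\ddec^{\calS}(q,x)\le\|q-x\|_\infty$ for every $x\in\calS$, because a maximum over a subset of coordinates cannot exceed the full $\ell_\infty$ norm. Hence $\ddec^{\calS}(q,\hat p)\le \ddec^{\calS}(q,p^*)\le r$.

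Second, we use the deciding coordinate of the pair $(\hat p,p^*)$, assuming $\hat p\neq p^*$ (otherwise there is nothing to show). Let $i=i(\hat p,p^*)$. Then
\[
\|\hat p-p^*\|_\infty=|\hat p_i-p^*_i|\le|\hat p_i-q_i|+|q_i-p^*_i|\le \ddec^{\calS}(q,\hat p)+r\le 2r.
\]
Applying the triangle inequality once more gives $\|q-\hat p\|_\infty\le \|q-p^*\|_\infty+\|p^*-\hat p\|_\infty\le 3r$, which is the claimed $3$-approximation.

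The step that needs the most care is making sure the structure works for an arbitrary subset $\calS$ given at query time. This is exactly why coordinates are stored per pair rather than as a single set $S_x$ per point: the argument only uses the deciding coordinate $i(\hat p,p^*)$ of the one pair that matters, and both points of that pair lie in $\calS$. The rest is bookkeeping. The stored value $x_{i(x,y)}$ lets us evaluate each term without storing $x$ in full, and we must confirm that indices cost $O(\log d)$ bits so the space bound holds.
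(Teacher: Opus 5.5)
Your proposal is correct and follows essentially the same route as the paper: store a realizing coordinate (with the needed stored values) for every pair, output the minimizer of the deciding distance restricted to pairs inside $\calS$, and bound $\|\hat p - p^*\|_\infty \le 2r$ through the single coordinate $i(\hat p,p^*)$ before one more triangle inequality gives $3r$. Storing per ordered pair with only $x_{i(x,y)}$, rather than per unordered pair with both values $x_{i_{xy}},y_{i_{xy}}$ as the paper does, is a purely cosmetic difference.
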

\begin{proof}
    For all pairs $x,y\in X$, define $i_{xy}$ to be a coordinate such that $|x_{i_{xy}} - y_{i_{xy}}|=\|x-y\|_\infty$ (which always exists by definition of $\|\cdot\|_\infty$).
    To construct the data structure, store $i_{xy}$ along with the values ${x_{i_{xy}}}, y_{i_{xy}}$ for all $x,y\in X$.
    This requires space $O(n^2\log d)$.

    \newcommand{\csp}{\calI^{\calS}}
    For a subset $\calS \subseteq X$ and for each $x\in \calS$, define $\csp_x = \{i_{xy} \mid y\in \calS\}$ (that is, the coordinates needed to determine distances between $x$ and all other points in $\calS$).
    At query time, given a point $q\in \mathbb{R}^{d}$ and subset $\calS\subseteq X$, for each $x\in \calS$ compute
    \[
        \ddec(x,q) = \max_{i\in \csp_x}|x_i - q_i|.
    \]
    Output the $x\in \calS$ which minimizes $\ddec(x,q)$.

    Notice that, by construction, $|\csp_x| \leq |\calS|$ for all $x\in \calS$.
    So, for each $x\in \calS$, computing $\ddec(x,q)$ takes $O(|\calS|)$ time, and thus finding the point $x\in \calS$ which minimizes $\ddec(x,q)$ takes $O(|\calS|^2)$ time.

    It remains to show that the output point is a 3-approximate nearest neighbor.
    Given $q\in \mathbb{R}^{d},\calS \subseteq X$, let $p^{*}$ be the nearest neighbor to $q$ in $\calS$, and let $\hat{p}$ be the point which minimizes $\ddec(x, q)$.
    Let $i^{*}=i_{p^{*}\hat{p}}$ be the stored coordinate such that $|p^{*}_{i^{*}} - \hat{p}_{i^{*}}| = \|p^{*} - \hat{p}\|_\infty$; so, $i^{*}\in \csp_{\hat{p}}$.
    We then have
    \begin{align}
        \label{eq:ps-ph-i}\|p^{*} - q\|_\infty \geq \ddec(p^{*}, q) \geq \ddec(\hat{p}, q) \geq |\hat{p}_{i^{*}} - q_{i^{*}}|
    \end{align}
    as $\hat{p}$ is the point which minimizes $\ddec(x,q)$ and for all $x\in \calS$, $\|x-q\|_\infty\geq \ddec(x,q)$.

    So, applying the triangle inequality,
    \begin{align*}
        \|\hat{p} - q\|_\infty &\leq \|\hat{p} - p^{*}\|_\infty + \|p^{*} - q\|_\infty \\
                               &= |\hat{p}_{i^{*}} - p^{*}_{i^{*}}| + \|p^{*} - q\|_\infty \\
                               &\leq |\hat{p}_{i^{*}} - q_{i^{*}}| + |q_{i^{*}} - p^{*}_{i^{*}}| + \|p^{*} - q\|_\infty \\
                               &\leq 3\|p^{*} - q\|_\infty
    \end{align*}
    with the final inequality following from \eqref{eq:ps-ph-i} and $|p^{*}_{i^{*}} - q_{i^{*}}|\leq \|p^{*} - q\|_\infty$ by definition of $\|\cdot\|_\infty$.
\end{proof}

Lemma \ref{lem:3-approx} then immediately implies a data structure for 3-approximate nearest neighbor under $\ell_\infty$, matching the approximation lower bound of Theorem \ref{thm:lb-randomized-alg}.
\threeapprox*

    \subsection{ANN from Subset Embeddings and Proving Theorem \ref{thm:small-query-time}}
\label{sec:subset-ann}
Given a subset embedding $S$ for $X$, we can easily construct a simple data structure in the following way.
Create a new dataset $X_S = \{x_S \mid x\in X\}$ consisting of the points in $X$ restricted to the coordinates of $S$, and build a data structure for $X_S$ using a standard $\ell_\infty$-ANN data structure (from Theorem \ref{thm:i01}).
For a query $q$, construct $q_S$ (which only requires reading $q$ at the coordinates of $S$), and use the constructed data structure to find the approximate nearest neighbor to $q_S$ in $X_S$, which is the output.

Using the subset embeddings of Theorem \ref{thm:subset-embed-imp} and the data structure of \cite{indyk2001approximate} (Theorem \ref{thm:i01}), this naive approach yields an $O(c\log \log |S|) = O(c\log\log n)$ approximation in $O(|S|\polylog n) = \ot{n^{1+1/c}}$ query time.\footnote{
In addition to the bounds from Theorem \ref{thm:subset-embed-imp} and Theorem \ref{thm:i01}, the approximation guarantee follows from Lemma \ref{lem:subset-to-ann}}
Unfortunately, it also requires space\footnote{The space is actually even larger: $\Theta(n^{2.01+1/c}\log d)$.} $\Omega(|S|\cdot n\cdot \log d)=\Omega(n^{2+1/c}\log d)$, which exceeds the desired $O(n^2\log d)$ space in Theorem \ref{thm:small-query-time}.

To reduce the space, we partition the dataset into multiple \textit{groups}, determine an (approximate) nearest-neighbor from each group using the data structure of \Cref{thm:i01}, and then find an approximate nearest-neighbor from these ``candidates.''
The full data structure is given in Figure \ref{fig:subset-embed-ANN}.

\begin{tcolorbox}[breakable, enhanced]
    \textbf{Data Structure of Theorem \ref{thm:small-query-time}} \\
    \textbf{Input:} Dataset $X\subset \mathbb{R}^{d}$ of $n$ points, parameter $c\geq 1$.

    \medskip
    \textbf{Data structure construction.}
    \begin{enumerate}
        \item Set $\rho=(1-1/c)/2$.
        \item Partition $X$ into $n^{1-\rho}$ groups each containing $n^{\rho}$ points (arbitrarily).
        \item For each group $G$, construct a subset embedding $S_G$ for $G$ with distortion $O(c)$ and dimension $n^{(1+1/c)\rho}$, using Theorem \ref{thm:subset-embed-imp}.
        \item For each group $G$, compute $G_S=\{x_{S_G} \mid x\in G\}$ and construct a data structure with approximation $O(\log \log |S_G|)$ for $G_S$ using \Cref{thm:i01}.
        \item Store the data structure of Lemma \ref{lem:3-approx} for $X$.
    \end{enumerate}

    \medskip
    \textbf{Query Processing.}  Upon receiving a query $q\in \mathbb{R}^{d}$:
    \begin{enumerate}
        \item For each group $G$, compute $q_{S_G}$ and find an approximate nearest-neighbor $p^{G}$ in $G_S$, using the stored data structure of \Cref{thm:i01}.
        \item Let $\calC = \{p^{G}\}_G$ be the set of candidate points.
        \item Use the data structure of Lemma \ref{lem:3-approx} to compute a $3$-approximate nearest-neighbor $\hat{p}$ to $q$ from $\calC$.
        \item Output $\hat{p}$.
    \end{enumerate}
\end{tcolorbox}
\begin{figure}[H]
    \vspace{-10pt}
    \caption{The data structure for Theorem \ref{thm:small-query-time}}
    \label{fig:subset-embed-ANN}
\end{figure}

\begin{lemma}
    \label{lem:subset-ann-qt}
    The query time of the data structure of Figure \ref{fig:subset-embed-ANN} is $\ot{n^{1+1/c}}$.
\end{lemma}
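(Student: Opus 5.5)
The plan is to add up the cost of the three phases of query processing in Figure \ref{fig:subset-embed-ANN}. These are the per-group queries to the structures of \Cref{thm:i01}, assembling the candidate set $\calC$, and the final call to the structure of Lemma \ref{lem:3-approx}. Each will be bounded separately in terms of $n$, $c$ and $\rho=(1-1/c)/2$, with no dependence on $d$.

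\textbf{Per-group cost.} There are $n^{1-\rho}$ groups. For each group $G$, forming $q_{S_G}$ means reading $|S_G| \le n^{(1+1/c)\rho}$ entries of $q$. By the access-cost convention in the preliminaries, each entry costs $O(\log n)$. The \Cref{thm:i01} structure for $G_S$ sits in dimension $|S_G|$ and stores $n^\rho$ points, so its query time is $O(|S_G|\polylog n^{\rho}) = \ot{n^{(1+1/c)\rho}}$. Summing over groups gives
\[
n^{1-\rho}\cdot \ot{n^{(1+1/c)\rho}} = \ot{n^{1-\rho+\rho+\rho/c}} = \ot{n^{1+\rho/c}},
\]
and $\rho/c \le 1/c$ because $\rho \le 1/2$. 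This term is therefore $\ot{n^{1+1/c}}$.

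\textbf{Final selection.} The candidate set has one point per group, so $|\calC| = n^{1-\rho}$. Lemma \ref{lem:3-approx} then runs in time $O(|\calC|^2)$, up to the $O(\log n)$ factor for accessing query entries. This gives $\ot{n^{2-2\rho}} = \ot{n^{2-(1-1/c)}} = \ot{n^{1+1/c}}$. This is exactly why $\rho$ was set to $(1-1/c)/2$: it makes the quadratic candidate-filtering cost match the target.

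The only step needing care is the per-group bound. I must confirm that the $\polylog$ factor in \Cref{thm:i01} is polylogarithmic in the group size, not in $d$. This holds because the structure is built over $G_S\subset\mathbb{R}^{|S_G|}$ and never sees the original dimension. The remaining bookkeeping is routine: rounding the group sizes $n^\rho$ and counts $n^{1-\rho}$ to integers only affects constants. Adding the two phases yields total query time $\ot{n^{1+1/c}}$.
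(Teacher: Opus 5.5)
Your proposal is correct and follows the paper's proof. Both split the query time the same way: reading $q_{S_G}$ and querying the \Cref{thm:i01} structures over $n^{1-\rho}$ groups costs $\tilde{O}(n^{1+\rho/c})$, and running Lemma \ref{lem:3-approx} on the $n^{1-\rho}$ candidates costs $O(n^{2-2\rho}) = O(n^{1+1/c})$. Your remarks that the polylog factor depends only on the group size, and about the $O(\log n)$ query-access cost, just make explicit what the paper leaves implicit.
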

\begin{proof}
    Let $\rho = (1-1/c)/2$.
    For each of the $n^{1-\rho}$ groups $G$, the algorithm computes $q_{S_G}$, where $|S_G| = n^{(1+1/c)\rho}$.
    Computing all such $q_{S_G}$ takes time $O(n^{1-\rho + (1+1/c)\rho}) = O(n^{1+\rho/c}) =O(n^{1+1/c})$.

    In order to construct the set of candidates $\calC$, for each of the $n^{1-\rho}$ groups $G$, the algorithm queries the data structure of \Cref{thm:i01} constructed for $G$.
This takes time $O(n^{1-\rho} \cdot n^{(1+1/c)\rho}\cdot \polylog n) = \ot{n^{1+1/c}}$, from \Cref{thm:i01} and the fact that for each group $G$, the subset embedding $S_G$ has $|S_G|=n^{(1+1/c)\rho}$.

Finally, finding a 3-approximate nearest neighbor from the set of candidates $\calC$ requires time $O(n^{2(1-\rho)}) = O(n^{1+1/c})$ using Lemma \ref{lem:3-approx}.
    So, the total time is $\ot{n^{1+1/c}}$.
\end{proof}

\begin{lemma}
    \label{lem:subset-ann-space}
    The space of the data structure of Figure \ref{fig:subset-embed-ANN} is $O(n^2 \log d)$.
\end{lemma}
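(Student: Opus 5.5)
We bound the storage of each component of Figure \ref{fig:subset-embed-ANN} separately and add the bounds. There are three components: the stored coordinate sets $S_G$, the \cite{indyk2001approximate} structures built on the groups, and the data structure of Lemma \ref{lem:3-approx} for $X$. Throughout, $\rho=(1-1/c)/2$, there are $n^{1-\rho}$ groups, each group has $n^{\rho}$ points, and $|S_G| = n^{(1+1/c)\rho}$.

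\textbf{Lemma \ref{lem:3-approx} component.} This structure uses space $O(n^2\log d)$ directly by that lemma.

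\textbf{Coordinate lists.} For each group we store the indices in $S_G$, at $O(\log d)$ bits each. Summing over groups gives $O(n^{1-\rho}\cdot n^{(1+1/c)\rho}\log d) = O(n^{1+\rho/c}\log d)$. Since $\rho/c<1$, this is $O(n^2\log d)$.

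\textbf{Group structures.} This is the main step. Theorem \ref{thm:i01} applied to a dataset of $n' = n^{\rho}$ points in dimension $d' = |S_G|$ uses space $O(n'^{1.01} d')$ words. For one group this is
\[
O\bigl(n^{1.01\rho}\cdot n^{(1+1/c)\rho}\bigr) = O\bigl(n^{(2.01+1/c)\rho}\bigr).
\]
Multiplying by the $n^{1-\rho}$ groups gives total space
\[
O\bigl(n^{1+(1.01+1/c)\rho}\bigr).
\]
We need the exponent to be at most $2$, that is, $(1.01+1/c)\rho \le 1$. With $\rho=(1-1/c)/2$ we have
\[
(1.01+1/c)(1-1/c)/2 \le (1.01)(1)/2 + \tfrac{1}{2}\cdot\tfrac{1}{c}\Bigl(1-\tfrac1c\Bigr) - 0.01\cdot\tfrac{1}{2c} < 0.51+0.125 < 1.
\]
So the group structures use $O(n^2)$ words.

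\textbf{Accounting for word size.} Stored coordinate values occupy $O(1)$ words each under the word-RAM assumption. Coordinate indices cost $O(\log d)$ bits, and this is absorbed by the $\log d$ factor. The generous slack in the exponent above also absorbs any $\polylog n$ overheads.

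Adding the three components gives total space $O(n^2\log d)$. The only delicate point is the exponent check for the \cite{indyk2001approximate} structures, and it has ample room because $\rho<1/2$.
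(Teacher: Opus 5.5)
Your proof is correct and follows the paper's argument: the same three-way split into the coordinate lists $S_G$, the per-group \cite{indyk2001approximate} structures, and the Lemma~\ref{lem:3-approx} structure, with the same per-group space calculation. The only difference is cosmetic in the exponent check: the paper loosens $n^{1.01\rho}$ to $n^{2\rho}$ and gets exponent $2-1/c+\rho/c\le 2$, whereas you keep the exact exponent $1+(1.01+1/c)\rho$, which is in fact well below $2$.
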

\begin{proof}
    Let $\rho=(1-1/c)/2$.
    For each of the $n^{1-\rho}$ groups, the data structure stores a subset embedding of size $n^{(1+1/c)\rho}$; this requires space $n^{1-\rho}\cdot n^{(1+1/c)\rho}\cdot \log d = O(n^{1+1/c}\log d)$.

    For each of the $n^{1-\rho}$ groups, we store a data structure of Theorem \ref{thm:i01} for the $n^{\rho}$ points of the group.
    Since the subset embedding for each group has dimension $n^{(1+1/c)\rho}$, each of these data structures requires space $O(n^{(1+1/c)\rho} \cdot n^{1.01\rho})=O(n^{(1+1/c)\rho} \cdot n^{2\rho})$.
    So, across the $n^{1-\rho}$ groups, the total space is
    \[
        O(n^{1 - \rho + (1+1/c)\rho + 2\rho}) = O(n^{1+2\rho + \rho/c}) = O(n^{2 - 1/c + \rho/c}) = O(n^2)
    \]
    where the second equality uses $\rho =(1-1/c)/2$ and thus $1 + 2\rho + \rho/c = 1 + (1-1/c) + (1-1/c)/(2c) = 2 - 1/c + \rho/c$.
    Finally, the space needed to store the data structure of Lemma \ref{lem:3-approx} is $O(n^2\log d)$, and thus the total space is $O(n^2\log d)$.
\end{proof}

\begin{lemma}
    \label{lem:subset-ann-qual}
    When constructed on $X\subset \mathbb{R}^{d}$ of size $n$ and run on any query $q\in \mathbb{R}^{d}$, the data structure of Figure \ref{fig:subset-embed-ANN} returns an $O(c\log\log n)$-approximate nearest neighbor to $q$ in $X$.
\end{lemma}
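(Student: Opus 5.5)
Fix a query $q$ and let $p^{*}$ be its true nearest neighbor in $X$, with $r=\|q-p^{*}\|_\infty$. Let $G^{*}$ be the group containing $p^{*}$. The plan is to show first that the candidate $p^{G^{*}}$ from this group is an $O(c\log\log n)$-approximate nearest neighbor to $q$. Then we show that the final $3$-approximation step over the candidate set $\calC$ loses only another constant factor.

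For the first step, the group $G^{*}$ has a subset embedding $S_{G^{*}}$ with distortion $\alpha=O(c)$ by Theorem \ref{thm:subset-embed-imp}, applied to the $n^{\rho}$ points of $G^{*}$. The data structure of Theorem \ref{thm:i01} is built on $G^{*}_S$ in dimension $|S_{G^{*}}|=n^{(1+1/c)\rho}\le n^2$. Its query on $q_{S_{G^{*}}}$ therefore returns $p^{G^{*}}$ such that $p^{G^{*}}_{S}$ is a $\beta$-approximate nearest neighbor to $q_{S}$ in $G^{*}_S$, with $\beta=O(\log\log |S_{G^{*}}|)=O(\log\log n)$. We then apply Lemma \ref{lem:subset-to-ann} with dataset $G^{*}$ in place of $X$. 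Since $p^{*}\in G^{*}$, $p^{*}$ is also the nearest neighbor of $q$ within $G^{*}$. This gives
\[
\|q-p^{G^{*}}\|_\infty\le (\alpha(\beta+1)+1)\,r = O(c\log\log n)\cdot r .
\]
One subtlety needs checking here: Lemma \ref{lem:subset-to-ann} only requires the nearest neighbor within the group, and the group's nearest neighbor is exactly $p^{*}$. So no cross-group issue arises, and the distortion guarantee is needed only among points of $G^{*}$, which is what $S_{G^{*}}$ provides.

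For the second step, $p^{G^{*}}\in\calC$, so $\min_{x\in\calC}\|q-x\|_\infty\le O(c\log\log n)\, r$. Lemma \ref{lem:3-approx}, invoked with the query $(q,\calC)$, returns $\hat p\in\calC$ with $\|q-\hat p\|_\infty\le 3\min_{x\in\calC}\|q-x\|_\infty$. Combining the two bounds gives $\|q-\hat p\|_\infty=O(c\log\log n)\cdot r$, as required.

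I expect the main obstacle to be bookkeeping rather than any conceptual step. The approximation factor of Theorem \ref{thm:i01} depends on the dimension, not on $n$, so I must check that the dimension $n^{(1+1/c)\rho}$ of each group's embedding is polynomial in $n$, which gives $\log\log$ of it equal to $O(\log\log n)$. I must also check that the $O(c)$ in the distortion of Theorem \ref{thm:subset-embed-imp} (concretely $2\lceil c\rceil-1\le 2c+1$) multiplies cleanly with $\beta+1$, so that everything collapses to $O(c\log\log n)$. Rounding of group sizes such as $n^{\rho}$ does not affect the argument.
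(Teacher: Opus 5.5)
Your proposal is correct and matches the paper's proof step for step. You apply Lemma \ref{lem:subset-to-ann} within the group containing $p^{*}$, which shows that group's candidate is an $O(c\log\log n)$-approximation. You then use that this candidate lies in $\calC$, so the final $3$-approximation of Lemma \ref{lem:3-approx} costs only a constant factor.
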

\begin{proof}
    Let $p^{*}\in X$ be the true nearest neighbor to $q$, and let $G$ be the group which contains $p^{*}$.
    Let $S_G$, $G_S$ be constructed as in Figure \ref{fig:subset-embed-ANN} for $G$, and let $v$ be returned by the call to the data structure of \Cref{thm:i01} for $G_S$.
    As the subset embedding $S_G$ has distortion $O(c)$ and the data structure of \Cref{thm:i01} has approximation $O(\log \log |S_G|) = O(\log \log n)$, by Lemma \ref{lem:subset-to-ann}, $\|v-q\|_\infty = O(c\log\log n) \cdot \|p^{*}-q\|_\infty$.

    By construction of the set of candidates $\calC$, $v\in \calC$.
    Let $\hat{p}$ be a 3-approximate nearest neighbor to $q$ in $\calC$ (which Figure \ref{fig:subset-embed-ANN} obtains via the data structure of Lemma \ref{lem:3-approx}).
    So, for all $x\in \calC$, $\|\hat{p}-q\|_\infty \leq 3\|x-q\|_\infty$, and thus
    \[
        \|\hat{p} - q\|_\infty \leq 3\|v - q\|_\infty = O(c\log\log n)\cdot \|p^{*}-q\|_\infty
    \]
    as desired.
\end{proof}

    \subsection{Proof of \texorpdfstring{\Cref{thm:c-258}}{Theorem \ref{thm:c-258}}} \label{sec:c-258}

We now present the data structure for Theorem \ref{thm:c-258}, which relies on repeated applications of the 3-approximate nearest neighbor data structure of Lemma \ref{lem:3-approx}.
At each step, we partition the current set of candidate points into groups of size $n^{\alpha}$.
We then run the algorithm of Lemma \ref{lem:3-approx} on each group to find a 3-approximate nearest neighbor to the query from that group, and repeat until only a small number of candidates remain (on which we can directly apply Lemma \ref{lem:3-approx} again).
By increasing $\alpha$ proportionally as the number of points (and thus the number of groups) decreases, we ensure that the query time at each step is $O(n^{1+1/c})$.
At each step, the value of $\alpha$ doubles, starting from $\alpha = 1/c$, so there are $O(\log c)$ many steps in total.
Since the approximation increases by a factor of 3 at each step (from the use of Lemma \ref{lem:3-approx}), the final approximation is $O(3^{\log c}) = O(c^{\log 3})$, as desired.

The full data structure is given in Figure \ref{fig:c-258}, and we prove the approximation and query time in Lemma \ref{lem:c-258-approx}.

\begin{tcolorbox}[breakable, enhanced]
    \textbf{Data Structure of \Cref{thm:c-258}} \\
    \textbf{Input:} Dataset $X \subset \mathbb{R}^{d}$ of $n$ points, parameter $c\geq 1$.

    \medskip
    \textbf{Data structure construction.}
    \begin{enumerate}
        \item For all $x,y\in X$, store a coordinate $i$ such that $|x_i - y_i| = \|x-y\|_\infty$ along with the values $x_i,y_i$.
    \end{enumerate}

    \medskip
    \textbf{Query Processing.} Upon receiving a query $q\in \mathbb{R}^{d}$:
    \begin{enumerate}
        \item Initialize $\calA = X$.
        \item For $i = 1, 2, \dots, \lfloor \log c \rfloor$:
            \begin{enumerate}
                \item Set $\alpha_i = 2^{i-1}/c$, and arbitrarily partition $\calA$ into $L = \lceil|\calA| / n^{\alpha_i}\rceil$ many groups $G_1,\ldots ,G_L$, each of size at most $n^{\alpha_i}$.
                \item For each group $G_j$, run the algorithm of Lemma \ref{lem:3-approx} on $G_j$ and query $q$ to find a $3$-approximate nearest neighbor $\hat{p}_j$ in $G_j$ to $q$.
                \item Set $\calA = \{\hat{p}_1, \hat{p}_2, \dots, \hat{p}_L\}$.
            \end{enumerate}
        \item Run the algorithm of Lemma \ref{lem:3-approx} on $\calA$ and query $q$ to obtain a final point $\hat{p}$. Return $\hat{p}$.
    \end{enumerate}
\end{tcolorbox}
\begin{figure}[H]
    \vspace{-10pt}
    \caption{The data structure for \Cref{thm:c-258}}
    \label{fig:c-258}
\end{figure}

It is immediate that the data structure of Figure \ref{fig:c-258} uses space $O(n^2 \log d)$.
It thus remains to show the query time and approximation factor.

\begin{lemma}
    \label{lem:c-258-query-time}
    The query time of the data structure of Figure \ref{fig:c-258} is $O(n^{1+1/c}\log c)$.
\end{lemma}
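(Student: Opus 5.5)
We bound the cost of each round of the loop in Figure \ref{fig:c-258} and the cost of the final call separately, then add them up. The only tool needed is Lemma \ref{lem:3-approx}, which answers a query on a subset $\calS$ in time $O(|\calS|^2)$. Since every group handled in a round is a subset of $X$, the stored data for $X$ can be reused with $\calS=G_j$. The key invariant is a bound on the size of $\calA$ at the start of each round. Initially $|\calA|=n$. Round $i$ produces one candidate per group, so afterwards
\[
|\calA|\le L=\lceil |\calA|/n^{\alpha_i}\rceil .
\]
By induction, after rounds $1,\ldots,i$ we have $|\calA|\le n^{1-\sum_{j\le i}\alpha_j}+O(1)$. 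The ceilings add at most one per level; to control them we use $\lceil a/b\rceil\le a/b+1$ and note that this is harmless once sizes exceed a constant. Since
\[
\sum_{j\le i}\alpha_j=(2^{i}-1)/c,
\]
it follows that at the start of round $i$ we have $|\calA|\le n^{1-(2^{i-1}-1)/c}$, up to lower-order terms.

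Next we bound the cost of round $i$. There are $L\le |\calA|/n^{\alpha_i}+1$ groups, each of size at most $n^{\alpha_i}$, so by Lemma \ref{lem:3-approx} the round costs
\[
O\bigl(L\cdot n^{2\alpha_i}\bigr)=O\bigl(|\calA|\, n^{\alpha_i}+n^{2\alpha_i}\bigr).
\]
Substituting the invariant, the first term is
\[
|\calA|\,n^{\alpha_i}\le n^{1-(2^{i-1}-1)/c+2^{i-1}/c}=n^{1+1/c}.
\]
For the second term, $i\le\lfloor\log c\rfloor$ gives $2^{i-1}\le c/2$, so $\alpha_i\le 1/2$ and $n^{2\alpha_i}\le n\le n^{1+1/c}$. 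Over the $\lfloor \log c\rfloor$ rounds the loop therefore costs $O(n^{1+1/c}\log c)$. The remaining overhead, namely partitioning $\calA$ and reading the stored coordinates, is linear in $|\calA|$ per round and is absorbed into this bound.

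For the final call, let $k=\lfloor\log c\rfloor$, so that $2^{k}>c/2$. After the loop,
\[
|\calA|\le n^{1-(2^{k}-1)/c}< n^{1/2+1/c}.
\]
Applying Lemma \ref{lem:3-approx} to this set costs $O(|\calA|^2)=O(n^{1+2/c})$. This is slightly worse than the target, so this step is the main obstacle. I would first recheck the invariant's exponent exactly, since the intended reading is presumably that the final set has size $n^{(1+1/c)/2}$. If that does not hold, I would choose $k$ as the smallest integer with $2^{k}\ge c$, equivalently $\lceil\log c\rceil$ rounds. Then the final $|\calA|\le n^{1/c}$ up to $O(1)$, and the final call costs $O(n^{2/c})$. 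This change keeps $O(\log c)$ rounds, but I would then need to handle the last round, where $\alpha_i$ may exceed $1/2$, more carefully: there the bound $n^{2\alpha_i}$ must be replaced by the actual group size, which is at most $|\calA|\le n^{1-(2^{i-1}-1)/c}$. With that substitution each term stays within $O(n^{1+1/c})$, giving the claimed total of $O(n^{1+1/c}\log c)$.
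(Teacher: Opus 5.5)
\textbf{A genuine gap, but it is in the statement itself.} Your analysis of the loop is the paper's argument: the same invariant $|\calA| = n^{1-(2^{i-1}-1)/c}$ at the start of round $i$, the same per-round cost $|\calA|\,n^{\alpha_i} = n^{1+1/c}$, and $O(\log c)$ rounds. You are more careful than the paper about the ceilings and the extra $n^{2\alpha_i}$ term.

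Your worry about step 3 is justified: the paper's proof never charges for the final call to Lemma~\ref{lem:3-approx}. Carry out the ``recheck'' you deferred. With $k = \lfloor \log c\rfloor$ rounds, the final set has size $n^{1-(2^k-1)/c}$, so the final call costs $n^{2-2(2^k-1)/c}$. This is at most $n^{1+1/c}$ if and only if $c \le 2^{k+1}-1$. Equivalently, the final set must fit in a single group of size $n^{\alpha_{k+1}} = n^{2^k/c}$, so that step 3 is just one more round of the same cost. The condition holds for every integer $c$, since $c < 2^{k+1}$. It fails for non-integer $c \in (2^{k+1}-1, 2^{k+1})$. For instance, for any $c\in(1,2)$ the loop is empty and step 3 runs Lemma~\ref{lem:3-approx} on all of $X$. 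On the dataset of Lemma~\ref{lem:lb-dist-2}, all $\binom{n}{2}$ stored coordinates are distinct, so this takes $\Theta(n^2)$ time, while the lemma claims $O(n^{1+1/c})$. So for the data structure exactly as in Figure~\ref{fig:c-258} and all real $c\ge 1$, the statement cannot be proved. The paper's proof has a hole exactly where you found one.

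What your proposal then lacks is the right repair, and the one you chose is too coarse. Always running $\lceil \log c\rceil$ rounds does give $O(n^{1+1/c}\log c)$. Your last-round argument is correct: groups have size at most $\min\{n^{\alpha_i},|\calA|\}$, so that round costs $O(|\calA|\,n^{\alpha_i})$. But this analyzes a different data structure. It also adds a round even where none is needed, which breaks Lemma~\ref{lem:c-258-approx}. At $c=3$ you would run two rounds and only guarantee a $27$-approximation, whereas $3\cdot 3^{\log 3}\approx 17.1$. The original one-round algorithm already meets the time bound there with approximation $9$.

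The minimal fix is to add the extra round only when $c > 2^{\lfloor \log c\rfloor+1}-1$, i.e.\ to run $\lceil \log(c+1)\rceil - 1$ rounds. This coincides with Figure~\ref{fig:c-258} for every integer $c$. Your round analysis together with the computation above then covers every $c\ge1$, and the approximation becomes $3^{\lceil \log(c+1)\rceil} \le 3(c+1)^{\log 3}$. Alternatively, keep the algorithm and state the lemma only for $c \le 2^{\lfloor \log c\rfloor+1}-1$ (in particular for integer $c$), finishing your proof with the final-call computation above.
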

\begin{proof}
    At each step $i$, the query algorithm partitions the current set $\calA$ into groups of size $n^{\alpha_i} = n^{2^{i-1}/c}$.
    For each group, it runs the algorithm of Lemma \ref{lem:3-approx} to find a $3$-approximate nearest neighbor to $q$; this takes time $O(n^{2\alpha_i})$ per group.
    Since there are $O(|\calA| / n^{\alpha_i})$ many groups, the total time for this step is $O(|\calA| / n^{\alpha_i} \cdot n^{2\alpha_i}) = O(|\calA| \cdot n^{\alpha_i})$.

    Initially, $|\calA| = n$.
    At each step $i$, the number of points in $\calA$ reduces by a factor of $n^{\alpha_i}$, since we keep only one point from each group of size $n^{\alpha_i}$.
    So, at the beginning of step $i$, we have
    \[|\calA| = \frac{n}{\prod_{j=1}^{i-1} n^{\alpha_j}} = n^{1 - \sum_{j=1}^{i-1} \alpha_j} = n^{1 - \left(2^{i-1} - 1\right)/c}.\]
    Thus, at step $i$, the time taken is
    \[O\left(|\calA| \cdot n^{\alpha_i}\right) = O\left(n^{1 - (2^{i-1} - 1)/c} \cdot n^{2^{i-1}/c}\right) = O(n^{1 + 1/c}).\]
    Since there are $O(\log c)$ many steps, the total query time is $O(n^{1+1/c} \cdot \log c) = \ot{n^{1+1/c}}$, as desired.
\end{proof}

\begin{lemma}
    \label{lem:c-258-approx}
    When run on any query $q\in \mathbb{R}^{d}$, the data structure of Figure \ref{fig:c-258} returns a $3 \cdot c^{\log 3}$-approximate nearest neighbor to $q$ in $X$.
\end{lemma}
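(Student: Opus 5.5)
The plan is an induction over the rounds of the query algorithm, maintaining the invariant that the current candidate set $\calA$ always contains a point that is a good approximate nearest neighbor to $q$ in $X$. Let $p^{*}$ be the nearest neighbor of $q$ in $X$ and $r=\|q-p^{*}\|_\infty$. The claim I would prove is: after round $i$ (for $i=0,1,\ldots,\lfloor\log c\rfloor$), $\calA$ contains a point $a_i$ with $\|q-a_i\|_\infty\le 3^{i} r$. The base case $i=0$ is immediate, since $\calA=X\ni p^{*}$.

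For the inductive step, suppose $a_{i-1}\in\calA$ at the start of round $i$ satisfies $\|q-a_{i-1}\|_\infty\le 3^{i-1}r$. The partition puts $a_{i-1}$ in some group $G_j$. By Lemma \ref{lem:3-approx}, applied with $\calS=G_j$, the returned point $\hat p_j$ satisfies
\[
\|q-\hat p_j\|_\infty \le 3\min_{x\in G_j}\|q-x\|_\infty \le 3\|q-a_{i-1}\|_\infty \le 3^{i} r .
\]
Since $\hat p_j$ is placed in the new $\calA$, we may take $a_i=\hat p_j$.

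The key point, and the only mild subtlety, is that Lemma \ref{lem:3-approx} guarantees a 3-approximation relative to the best point in the queried subset, not relative to $p^{*}$. For this reason the invariant has to track a surviving good point, rather than $p^{*}$ itself, which may be discarded in the very first round. A naive argument that tries to follow $p^{*}$ would fail here.

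To finish, apply Lemma \ref{lem:3-approx} once more to the final $\calA$, which contains $a_{\lfloor\log c\rfloor}$. This gives
\[
\|q-\hat p\|_\infty \le 3\cdot 3^{\lfloor \log c\rfloor} r \le 3\cdot 3^{\log c} r = 3\cdot c^{\log 3}\, r ,
\]
using $3^{\log_2 c}=c^{\log_2 3}$. If $\lfloor\log c\rfloor=0$, the loop is empty and we get $3\le 3c^{\log 3}$ directly. No step is a real obstacle beyond stating the invariant correctly.
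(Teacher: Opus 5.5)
Your proposal is correct and follows essentially the same route as the paper: induction on the rounds, maintaining that $\calA$ contains a $3^{i}$-approximate nearest neighbor (a surviving good point, not $p^{*}$ itself), then one final application of Lemma \ref{lem:3-approx} giving $3^{\lfloor \log c\rfloor+1}\le 3\cdot c^{\log 3}$. Your explicit remark about the empty-loop case $\lfloor \log c\rfloor = 0$ is a harmless addition that the paper leaves implicit.
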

\begin{proof}
    Let $p^{*}\in X$ be the true nearest neighbor to $q$.
    We show by induction that after step $i$, the set $\calA$ contains a $3^i$-approximate nearest neighbor to $q$ in $X$.
    The base case $i=0$ is trivial, as initially $\calA = X$.
    Now, suppose after step $i-1$, $\calA$ contains a $3^{i-1}$-approximate nearest neighbor to $q$ in $X$; denote this point as $p^{(i-1)}$.
    Consider step $i$.
    Let $G$ be the group in step $i$ which contains $p^{(i-1)}$.
    Since the algorithm of Lemma \ref{lem:3-approx} returns a $3$-approximate nearest neighbor to $q$ in $G$,
    the set $\calA$ after step $i$ contains a point $p^{(i)}$ such that
    \[\|p^{(i)} - q\|_\infty \leq 3 \cdot \|p^{(i-1)} - q\|_\infty \leq 3^i \cdot \|p^{*} - q\|_\infty.\]
    Thus, after step $\lfloor \log c \rfloor$, the set $\calA$ contains a $3^{\lfloor \log c \rfloor}$-approximate nearest neighbor to $q$ in $X$.
    Finally, the last call to the algorithm of Lemma \ref{lem:3-approx} returns a $3$-approximate nearest neighbor to $q$ in $\calA$, which is a $3^{\lfloor \log c \rfloor + 1} \leq 3 \cdot 3^{\log c} = 3 \cdot c^{\log 3}$-approximate nearest neighbor to $q$ in $X$, as desired.
\end{proof}

Theorem \ref{thm:c-258} then follows from Lemmas \ref{lem:c-258-query-time} and \ref{lem:c-258-approx}.

\section*{Acknowledgement}
We thank Alexandr Andoni for suggesting the proof idea of Theorem \ref{thm:subset-embed-imp}.
We also thank Erik Waingarten for suggestions on an earlier manuscript which improved the presentation of the results.

\bibliographystyle{alpha}
\bibliography{ref.bib}

\appendix
\section{Proofs of Lower Bounds}
\label{sec:lower-bound-proofs}
In this section, we present lower bounds on the query time of $\ell_\infty$ ANN data structures in the ultra-high dimensional regime.
These results justify the necessity of higher approximation guarantees when aiming for query times independent of $d$ in the ultra-high dimensional regime.
We first prove the result for deterministic data structures and then extend it to randomized algorithms using Yao's Minimax Principle (which will then prove \Cref{thm:lb-randomized-alg}).

\begin{restatable}{lemma}{threelower}
    \label{lem:3-lower}
    For all $c < 3$, any deterministic data structure for $c$-approximate nearest neighbor over $\ell_\infty^d$ must have query time $\Omega(d)$.
\end{restatable}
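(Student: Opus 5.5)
We argue by an adversary construction. Fix a deterministic data structure and suppose its query algorithm reads $o(d)$ coordinates of the query (say at most $d/2$ in the worst case). The data structure is built on a dataset $X$ that we choose; the algorithm's coordinate choices are adaptive, but since the algorithm is deterministic we can simulate it against a fixed baseline query and record which coordinates it reads.

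\textbf{The dataset.} Take $X=\{x^{(1)},x^{(2)}\}$, or more generally $X=\{e^{i}\}$-style points padded appropriately, with the property that the two candidate answers are at distance $2$ from each other and a query can be at distance $1$ from one and $3$ from the other. Concretely, I would use $d$ coordinates and the points
\[
x^{(0)}=\vec{0}, \qquad x^{(j)} = 2e^{j}\ \ (j\in[d]),
\]
so $n=d+1$ (if one insists $d\gg n$, pad by repeating the construction on a block of $n$ coordinates and set the remaining coordinates to $0$; the argument is unchanged, giving $\Omega(n)=\Omega(\min\{n,d\})$, and then to get $\Omega(d)$ we instead use two points as below). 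A cleaner two-point version for genuine $\Omega(d)$: let $x=\vec{0}$ and $y=2\cdot\vec{1}$? That does not hide a single coordinate, so instead I would use $X=\{\vec{0}, \vec{2}\}$ together with queries $q$ whose coordinates are all $1$ except one hidden coordinate $k$. Query $q^{(k,0)}$ has $q_k=-1$... wait, then distances are $\|q-\vec0\|=1,\ \|q-\vec2\|=3$, and for $q^{(k,2)}$ with $q_k=3$ the roles swap: distance $3$ to $\vec0$ and $1$ to $\vec 2$. The baseline $q=\vec{1}$ is equidistant.

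\textbf{Adversary step.} Run the deterministic query algorithm on the baseline $\vec 1$; it reads a set $R$ of coordinates with $|R|<d$. Pick $k\notin R$. On both $q^{(k,0)}$ and $q^{(k,2)}$ the algorithm sees exactly the same values as on $\vec 1$ (by induction over its adaptive reads, since all read coordinates equal $1$), hence outputs the same point. But on one of these two queries that output is at distance $3$ while the nearest neighbor is at distance $1$, contradicting $c<3$. Hence the algorithm must read all $d$ coordinates on $\vec1$, i.e.\ query time $\Omega(d)$.

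The main subtlety I expect is only in making the indistinguishability argument rigorous for adaptive reads and for the data structure's stored memory (which is fixed given $X$, so it doesn't matter), and in checking that the distances are exactly $1$ versus $3$ so that any $c<3$ fails; for the randomized extension (Theorem~\ref{thm:lb-randomized-alg}) one would then put the uniform distribution over the $2d$ queries $q^{(k,0)},q^{(k,2)}$ and apply Yao's principle.
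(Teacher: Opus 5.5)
Your final argument is correct and is essentially the paper's proof: your dataset $\{\vec 0,\vec 2\}$ with baseline query $\vec 1$ is the paper's $\{x-\vec 1, x+\vec 1\}$ with $x=\vec 1$, and changing one coordinate of $\vec 1$ that is never read to $-1$ or $3$ gives distances $1$ versus $3$, exactly as in the paper. Your explicit simulation on the baseline query to handle adaptive reads tightens a step the paper leaves implicit. You should delete the abandoned first attempts, namely the $n=d+1$ construction and the remark about $2\cdot\vec 1$, since $2\cdot\vec 1$ gives the same set as $\{\vec 0,\vec 2\}$.
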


\begin{proof}
    We prove the contrapositive: For any deterministic data structure for $c$-approximate nearest neighbor over $\ell_\infty^{d}$ with query time $o(d)$, there exist a dataset and a query such that the returned point is not a $c$-approximation.
Let the adversarial dataset $P \subset \mathbb{R}^d$ contain two points $a,b \in \mathbb{R}^d$ such that $a = x + \vec{1}, b = x - \vec{1}$ for an arbitrary vector $x \in \mathbb{R}^d$.
The query $q$ has entries $q_i = x_i$ except for some $i_0 \in [d]$ that is not queried by the deterministic query algorithm.
The query algorithm deterministically outputs either $a$ or $b$.
If it outputs $a$, let $q_{i_0} = x_{i_0} - 2$; then $a$ is not a $c$-approximation since $c < 3$.
Otherwise, let $q_{i_0} = x_{i_0} +2$; then $b$ is not a $c$-approximation.
\end{proof}

By simply applying Yao's Minimax Principle, we can further prove a lower bound result for randomized algorithms, which is exactly \Cref{thm:lb-randomized-alg}.

\begin{proof}[Proof of \Cref{thm:lb-randomized-alg}]
    Let $R_{\br}$ be a randomized algorithm over some distribution $\lambda$ of random strings $\br$ for $c$-approximate nearest neighbor over $\ell_{\infty}$ with success probability at least $0.5001$ for any dataset and query $q$.
We know, for any distribution $\mu$ over the queries,
    \begin{align*}
        0.5001 \leq & \sum_{\bq \in \supp(\mu)} p_\mu(\bq) \cdot \sum_{\br \in \supp(\lambda)} p_\lambda(\br) \ind{\text{$R_{\br}$ outputs a valid answer on $\bq$}} \\
        = & \sum_{\br \in \supp(\lambda)} p_\lambda(\br) \cdot \sum_{\bq \in \supp(\mu)} p_\mu(\bq) \ind{\text{$R_{\br}$ outputs a valid answer on $\bq$}} \\
        \leq & \max_r \sum_{\bq \in \supp(\mu)} p_\mu(\bq) \ind{\text{$R_{r}$ outputs a valid answer on $\bq$}}.
    \end{align*}
    Let $r^*$ be the optimizer of the last expression above.
Note $R_{r^*}$ is a deterministic data structure with success probability at least $0.5001$ over $\bq \sim \mu$, whose query time lower bounds that of $R_{\br}$.
Now we show there exists a dataset and a distribution of queries such that any deterministic data structure that answers at least a $0.5001$ fraction of queries correctly must have query time $\Omega(d)$, which will complete the proof.

    Let the dataset be $P$ in the above proof of \Cref{lem:3-lower} (i.e., $P$ contains two points $a = x + \vec{1}, b = x - \vec{1}$ where $x \in \mathbb{R}^d$ is an arbitrary vector).
Let the distribution of queries be defined as follows: (1) sample an index $i_0 \in [d]$ uniformly at random; (2) the query $q$ is such that $q_i = x_i$ for all $i \neq i_0$; (3) $q_{i_0} = x_{i_0} - 2$ with probability $0.5$, and $q_{i_0} = x_{i_0} + 2$ with probability $0.5$.

    Let $Q$ be the set of coordinates that $R_{r^*}$ checks.
    In order for $R_{r^*}$ to be correct with probability at least $0.5001$ over the above distribution of queries, we have
    \[\frac{|Q|}{d} + \left(1 - \frac{|Q|}{d}\right) \cdot \frac{1}{2} \geq 0.5001, \]
    which gives $|Q| \geq 0.0001 \cdot d  = \Omega(d)$ as desired.
\end{proof}

It is also straightforward to see that in any metric space, any ANN data structure with finite approximation must have query time $\Omega(\min\{n,d\})$.
\begin{lemma}
    \label{lem:qt-lower-bound}
    Under any metric space $(\mathbb{R}^{d}, d)$, there exists a dataset $X\subset \mathbb{R}^{d}$ of $n$ points for which any data structure with finite approximation and a deterministic query algorithm must have query time $\Omega(\min \{n,d\})$.
\end{lemma}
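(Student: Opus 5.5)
We will prove Lemma \ref{lem:qt-lower-bound} with an adversary argument in the spirit of Lemma \ref{lem:lb-finite-dist}, using the standard basis vectors as the dataset. Let $k=\min\{n,d\}$. We take $X$ to consist of the points $e^{1},\ldots,e^{k}\in\mathbb{R}^{d}$. If $n>k$, we pad $X$ with $n-k$ further distinct points placed far away, for instance $M\cdot e^{1}$ for several large distinct scalars $M$. These padding points live in coordinate $1$ only, so they do not disturb the argument below. (If one prefers to avoid them, the lemma can be stated for datasets with $\min\{n,d\}$ points.)

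The key mechanism is that finite approximation forces exact nearest neighbors whenever the query lies on a data point. Concretely, if $q=e^{j}$, then the nearest-neighbor distance is $0$. Any $c$-approximate answer $\hat p$ must then satisfy $d(q,\hat p)\le c\cdot 0=0$, so $\hat p=e^{j}$ exactly. This holds in any metric on $\mathbb{R}^{d}$, since only identity of indiscernibles is used. So the query algorithm, on queries from the set $\{e^{1},\ldots,e^{k}\}$ (and also on the query $\vec 0$, as discussed below), must identify which basis vector it was given.

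The adversary argument runs as follows. Suppose the deterministic query algorithm reads at most $k-2$ coordinates of the query, possibly adaptively. Run it against the query $\vec 0$, answering every probe with $0$, and let $Q$ be the set of probed coordinates, so $|Q|\le k-2$. Then there are two indices $i\ne j$ in $[k]\setminus Q$. On input $e^{i}$, every probe in $Q$ returns $0$. By determinism and adaptivity, the run on $e^{i}$ is identical to the run on $\vec 0$, so it produces the same output; the same holds for $e^{j}$. However, the required outputs $e^{i}$ and $e^{j}$ differ, which is a contradiction. Hence some query forces at least $k-1=\Omega(\min\{n,d\})$ probes, and since each probe costs at least unit time, the query time is $\Omega(\min\{n,d\})$.

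The main subtlety is the model. We need to make sure the argument holds even though the algorithm may choose its probes adaptively and the data structure may store arbitrary precomputed information about $X$. Both are handled by the simulation against $\vec 0$. The stored information is fixed independently of the query, so it cannot distinguish $e^{i}$ from $e^{j}$; only the coordinates actually probed can, and those all read $0$. The other point to check is that the padding points never become valid answers for $q=e^{i}$. This is immediate, since they are distinct from $e^{i}$ and so lie at positive distance from it.
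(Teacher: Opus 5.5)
Your proof is correct and follows the same route as the paper: the standard basis vectors as the dataset, the observation that finite approximation forces an exact answer when the query equals a data point, and two indistinguishable queries $e^{i},e^{j}$ that must therefore receive the same output. Your simulation against $\vec 0$ justifies the paper's counting step more carefully (the paper asserts that an algorithm reading $k/2$ coordinates ``can output only $k/2$ distinct points,'' which glosses over adaptivity and real-valued probes), and your padding covers the case $n>d$, where the paper's dataset has only $k<n$ points.
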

\begin{proof}
    Let $k = \min \{n,d\}$, and consider $X$ which contains $e_1,\ldots ,e_k$, where $e_i$ is the standard unit basis vector that is 1 at coordinate $i$ and 0 elsewhere.
    Let $A$ be any query algorithm which examines only $k/2$ elements of a query; since the algorithm is deterministic, it can thus output only $k/2$ distinct points across all queries.
    So, there exist query points $e_i,e_j$, $i\neq j\in [k]$ for which $A$ outputs the same point as a nearest neighbor; let this point be $p$ and without loss of generality suppose $e_i\neq p$.
    As $e_i \neq p$ and $d$ is a metric, $d(e_i, p) > 0$.
    However, $e_i \in X$ by construction, so the nearest neighbor to $e_i$ in $X$ has distance $d(e_i, e_i) = 0$.
    Thus, the approximation is unbounded.
\end{proof}

The same lower bound for randomized algorithms also follows from Yao's Minimax Principle, in a similar manner to the proof of Theorem \ref{thm:lb-randomized-alg}.

\end{document}